\documentclass{amsart}

\usepackage[utf8]{inputenc}
\usepackage[english]{babel}
\usepackage{csquotes}
\usepackage{hyperref}

\usepackage{amssymb}
\usepackage{dcolumn}
\usepackage{enumerate}
\usepackage{mathdots}
\usepackage{tabularx}
\usepackage{subfigure}
\usepackage{amsmath}
\usepackage{amsthm}
\usepackage{tikz} 
\usepackage{float}
\usepackage{mathtools}
\usepackage{graphicx}
\usepackage{pgfplots}
\usepackage{cite}
\usepackage{pgf,tikz}
\usepackage{mathrsfs}
\pgfplotsset{compat=1.17} 

\numberwithin{equation}{section}

\newtheorem{thm}{Theorem}[section]

\newtheorem{lem}[thm]{Lemma}
\newtheorem{rem}[thm]{Remark}

\newtheorem{obs}[thm]{Observation}

\theoremstyle{definition}
\newtheorem{defn}{Definition}[section]

\newcommand{\C}{\mathcal{C}}
\newcommand{\Aut}{\text{Aut}}
\newcommand{\wt}[1]{\widetilde{#1}}
\newcommand{\smn}{S(m,n)}

\newcommand{\p}[2]{P(#1,#2)}
\newcommand{\up}[2]{\mathbb{P}_{#1}(#2)} 
\newcommand{\pmn}{\p{m}{n}}
\newcommand{\upmn}{\up{m}{n}}
\newcommand{\flop}{\overleftarrow{r}}

\begin{document}

\author{Sa\'ul A. Blanco and Charles Buehrle}
\title{Lengths of Cycles in Generalized Pancake Graphs}

\address{Department of Computer Science, Indiana University, Bloomington, IN 47408}
\email{sblancor@indiana.edu}
\address{ Department of Mathematics, Physics, and Computer Studies, Notre Dame of Maryland University, Baltimore, MD 21210}
\email{cbuehrle@ndm.edu}

\date{July 11, 2023}

\bibliographystyle{ieeetr}

\begin{abstract}
In this paper, we consider the lengths of cycles that can be embedded on the edges of the generalized pancake graphs which are the Cayley graph of the generalized symmetric group $S(m,n)$, generated by prefix reversals. The generalized symmetric group $S(m,n)$ is the wreath product of the cyclic group of order $m$ and the symmetric group of order $n!$. Our main focus is the underlying \emph{undirected} graphs, denoted by $\upmn$. In the cases when the cyclic group has one or two elements, these graphs are isomorphic to the pancake graphs and burnt pancake graphs, respectively. We prove that when the cyclic group has three elements, $\up3n$ has cycles of all possible lengths, thus resembling a similar property of pancake graphs and burnt pancake graphs. Moreover, $\up4n$ has all the even-length cycles. We utilize these results as base cases and show that if $m>2$ is even, $\upmn$ has all cycles of even length starting from its girth to a Hamiltonian cycle. Moreover, when $m>2$ is odd, $\upmn$ has cycles of all lengths starting from its girth to a Hamiltonian cycle. We furthermore show that the girth of $\upmn$ is $\min\{m,6\}$ if $m\geq3$, thus complementing the known results for $m=1,2.$

\smallskip
\noindent \textbf{Keywords.} pancake graph, girth, cycles, pancyclic
\end{abstract}

\maketitle

\section{Introduction}\label{s:intro}

The pancake problem consists of sorting a permutation utilizing prefix-reversals. In its classical formulation~\cite{Dweighter75}, one is tasked with sorting a stack of pancakes with different diameters utilizing only a chef's spatula by picking up some pancakes from the top of the stack, flipping them, and placing them back on top of the stack. The problem has attracted the interest of computer scientists due to its applications to interconnection networks~\cite{KF95} and the interest of computational biologists due to its application to genome rearrangements~\cite{HannenPev}. The burnt pancake problem, first introduced in~\cite{GatesPapa}, considers the setting where the elements of permutations also have a sign. In this setting, every time a prefix-reversal is applied to a permutation, the positions affected also have a change in sign (from ``$+$" to ``$-$" or the other way around.)

A natural extension of the \emph{symmetric group} of order $n!$, denoted by $S_n$, and the \emph{signed symmetric group} of order $2^n n!$, denoted by $B_n$, is the \emph{generalized symmetric group}, denoted by $\smn := C_m\wr S_n$, where $C_m$ is the cyclic group of order $m$ whereas the group $\smn$ is of order $m^n n!$. It follows that an extension of the pancake graphs and the burnt pancake graphs, along with the natural extension of prefix-reversals and their inverses, would be the Cayley graph of the generalized symmetric group generated by prefix-reversals. We refer to these graphs as the \emph{generalized pancake graphs} or \emph{$m$-sided pancake graphs}. Intuitively, we are dealing with a stack of $n$ pancakes, each with $m$ different ``sides," or colors. When $m=1$ and $m=2$, we recover the original setting of the pancake problem and burnt pancake problem, respectively. Each time one ``flips" a pancake, we cycle trough its sides, which resets after $m$ flips. Alternatively, each time one ``flops'' a pancake, we reverse the cycle of its sides. 

We prioritize the role of a ``flip'' in fidelity to the original \emph{pancake problem}. That is, we consider the directed graph with only edges between generalized permutations that are a ``flip'' away, denoted by $\pmn$. When including the set of ``flops'' in the generating set of the Cayley graph those edges may also be considered undirected, which in that case we denote the graph by $\upmn$. We focus our attention to the lengths of the cycles one can embed into $\pmn$ and $\upmn$. The existence of certain cycle lengths in $\pmn$ remains more open than in $\upmn$. For a graph $\Gamma$, we denote its vertex and edge set by $V(\Gamma)$ and $E(\Gamma)$, respectively.

\subsection{Main results} The main results of this article are the the following.
\begin{enumerate}
    \item We prove that the girth of the undirected generalized pancake graph $\upmn$ is $\min\{m,6\}$ if $m
    \geq3$, This result complements the cases when $m=1$, where the girth is $6$, and when $m=2$, where the girth is $8$, proved by Compeau~\cite[Theorem 5 and Theorem 10]{Compeau2011}. Consequently, the girth is known for all the undirected generalized pancake graphs.
    \item If $m\geq3$ is odd and $n\geq2$, then $\upmn$ has all cycles of length $\ell$ with $\min\{m,6\}\leq \ell\leq |V(\upmn)|$.
    \item If $m\geq3$ is even and $n\geq2$, then $\upmn$ has all cycles of length $2\ell$ with $\min\{m,6\}/2\leq \ell\leq |V(\upmn)|/2$. (Despite computer searches, no odd length cycles were found, see Section~\ref{sec:computer}.)
\end{enumerate} In Section~\ref{sec:prelims}, we list the main definitions and notation needed in the article. In Section~\ref{sec:pancakegraphs} we describe formally the generalized pancake graphs and their undirected counterparts. In Section~\ref{sec:girth}, we prove the girth results, providing a floor to cycle lengths. Moreover, in Sections~\ref{sec:up3n} and~\ref{sec:up4n} we prove that the $\up3n$ has all cycles of length $k$ with $3\leq \ell\leq |V(\up3n)|$ and that $\up4n$ has all cycles of length $2k$ with $2\leq k\leq |V(\up4n)|/2$. These are the base cases needed to prove the general results in Section~\ref{sec:generalcase}. Finally, in Section~\ref{sec:computer} we describe a parallel implementation, which we make available, of an algorithm in~\cite{HJ08} to find cycles in these generalized pancake graphs.

\section{Preliminaries}\label{sec:prelims}

Throughout this article, we shall use the \emph{integer interval} notation. Namely, if $i, j\in\mathbb{Z}$ with $i\leq j$ and $n\in\mathbb{Z}^+$, we define $[n] :=\{1,2,3, \ldots, n\}$ and $[i,j] := \{i, i+1, i+2, \ldots, j\}.$  Furthermore, let $\zeta$ be a primitive $m$-th root of unity. We will denote an element of $\smn$ by $\wt{\pi} = \vec{j} \wr \pi \in \smn$ as $\wt{\pi} = [\zeta^{j_1}\pi_1,\ldots,\zeta^{j_n}\pi_n]$ where $\vec{j} = \langle\zeta^{j_1},\zeta^{j_2},\ldots,\zeta^{j_n}\rangle$ with each $j_i\in [0,  m-1]$, for $i\in [n]$, and $\pi=\pi_1\pi_2\cdots\pi_n\in S_n$. 
Let $\wt{\pi}=[\zeta^{j_1}\pi_1,\zeta^{j_2}\pi_2,\ldots,\zeta^{j_n}\pi_n]$. Then, we define a \emph{generalized forward prefix reversal} or \emph{generalized pancake flip} of length $i \in [n]$ on $\wt{\pi}$ as 
   
\begin{align*}
r_i(\wt{\pi}):=[\zeta^{j_i+1}\pi_i,\zeta^{j_{i-1}+1}\pi_{i-1},\ldots,\zeta^{j_1+1}\pi_1,\zeta^{j_{i+1}}\pi_{i+1}, \zeta^{j_{i+2}}\pi_{i+2},\ldots,\zeta^{j_n}\pi_n)].
\end{align*}
The inverse of a flip of length $i$ is the \emph{generalized backward prefix reversal} or \emph{generalized pancake flop} of length $i \in [n]$ on $\wt{\pi}$, denoted by
\begin{align*}
\flop_i(\wt{\pi}):=[\zeta^{j_i-1}\pi_i,\zeta^{j_{i-1}-1}\pi_{i-1},\ldots,\zeta^{j_1-1}\pi_1,\zeta^{j_{i+1}}\pi_{i+1}, \zeta^{j_{i+2}}\pi_{i+2},\ldots,\zeta^{j_n}\pi_n)].
\end{align*}

For example, $r_2([\zeta^0 2,\zeta^2 1,\zeta^1 4,\zeta^2 3])=[\zeta^0 1, \zeta^1 2,\zeta^1 4, \zeta^2 3] \in S(3,4)$.

We will refer to the powers of $\zeta$ as \textit{signs} (also called ``colors" in~\cite{CSW21}) and the elements of the underlying permutation in $S_n$ as the \textit{symbols} of the generalized permutation.

For simplicity in the notation of elements of $\smn$, we will use one-line notation with superscripts to represent the signs of the elements. So in general, elements in $\smn$ will be denoted by $a_1^{i_1}a_2^{i_2}\cdots  a_n^{i_n}$, where $a_1,a_2,\ldots, a_n\in S_n$ and $i_1,i_2,\ldots,i_n\in [0,m-1]$.

For example, $4^23^11^02^1\in S(3,4)$ has four symbols and three signs (indicated by the superscripts) corresponds to the element $[\zeta^24,\zeta^13,\zeta^01,\zeta^12]$.

\subsection{Girth, pancyclic, panevencyclic}

By a \emph{cycle}, we mean a closed path where only the first and last vertices are repeated. We will focus on the nontrivial case where the cycles have length larger than two. 

A graph $\Gamma=(V,E)$ is said to be \emph{pancyclic} if it contains cycles of length $k$ for every $3\leq k\leq |V|$. Moreover, we say that $\Gamma$ is \emph{panevencyclic} if $|V|$ is even and if $\Gamma$ contains cycles of length $2k$ for every $2\leq k\leq |V|/2$. Here, $|V|$ denotes the number of vertices of $\Gamma$. In the same spirit, we say that $\Gamma$ is \emph{$k$-pancyclic} if $\Gamma$ has all cycles of length $\ell$ with $k\leq\ell\leq|V|$ and that $\Gamma$ is \emph{$2k$-panevencyclic} if $|V|$ is even and $\Gamma$ has all cycles of length $2\ell$ with $k\leq\ell\leq |V|/2$. Notice that when $\Gamma$ is panevencyclic, it is possible that it contains cycles of odd length.

The \emph{girth} of $\Gamma$ is the length of the shortest cycle of $\Gamma$. It is known that the girth of the pancake graph $\p1n=\up1n$ and the burnt pancake graph $\p2n=\up2n$ is six and eight, respectively~\cite{Compeau2011}. In the next section, we define formally the generalized pancake graphs $\pmn$ and their undirected counterparts $\upmn$.

\section{Generalized Pancake Graph}\label{sec:pancakegraphs}

For our purposes, if $G$ is a group and $S$ is a set that generates $G$, then the \emph{Cayley graph} of $G$ with respect to $S$ is the graph with vertex set $G$ and edge set $\{(g,sg)\colon s\in S, g\in G\}$.

With the collection of generalized prefix reversals, we can define what we call the \emph{directed generalized pancake graph}, denoted by $\pmn$, as the graph whose vertex set is $\smn$ and whose set of (directed) edges is given by
\[
\{\left(\wt{\pi},r_i(\wt{\pi})\right) \mid \wt{\pi}\in \smn, i \in [n]\},
\]
where we shall refer to the $r_i$ as the \emph{label} of the edge.

If the generating set contains both generators and their inverses, then the generating set is said to be \emph{symmetric}. The Cayley graph with vertex set $\smn$ and whose set of edges is given by
\begin{align*}
    \{\left(\wt{\pi},r_i(\wt{\pi})\right) \mid \wt{\pi}\in \smn, i \in [n]\} \cup
    \{\left(\wt{\pi},\flop_i(\wt{\pi})\right) \mid \wt{\pi}\in \smn, i \in [n]\},
\end{align*}
shall be referred to as the \emph{$m$-sided pancake graph} or \emph{undirected generalized pancake graph}, denoted by $\upmn$. When the generating set is symmetric the resulting Cayley graph is commonly considered an undirected graph. We note that the generating set $\{r_i, \flop_i \mid i \in [n]\}$ is symmetric and that $\upmn$ is the underlying, undirected graph of $\pmn$ and is the Cayley graph of $\smn$ first studied by Justan, Muga, and Sudborough in~\cite{JMS02}. 
In our study, the following two properties are useful.
\begin{defn}[Vertex- and edge-transitive]
Given a graph $\Gamma=(V,E)$, we say that $\Gamma$ is 
\begin{description}
    \item[vertex-transitive] If for any pair $v_1,v_2\in V$, there exists some $\varphi\in \Aut(\Gamma)$ such that $\varphi(v_1)=v_2$. Moreover, if $V'\subseteq V$ and for every $v'_1,v'_2\in V'$, there exists $\varphi'\in\Aut(\Gamma)$ such that $\varphi'(v'_1)=v'_2$, we say that $V'$ is \emph{vertex-transitive}. 
    \item[edge-transitive] If for any pair $e_1,e_2\in E$, there exists some $\psi\in \Aut(\Gamma)$ such that $\psi(e_1)=e_2$. Moreover, if $E'\subseteq E$ and for every $e'_1,e'_2\in E'$, there exists $\psi'\in\Aut(\Gamma)$ such that $\psi'(e'_1)=e'_2$, we say that $E'$ is \emph{edge-transitive}. 
\end{description} 
\end{defn}

Since $\pmn$ and $\upmn$ are Cayley graphs, it follows that they are vertex-transitive. However, $\pmn$ and $\upmn$ are not edge-transitive but the set of edges with the same label do form an edge-transitive set (see Figure~\ref{fig:\p32} and Figure~\ref{fig:\p42}.) We prove this property in the Lemma that follows. 

\begin{lem}\label{l:edgetrans}
 Let $E^k_n$ be the set of edges of $\pmn$ labeled by 
 $r_k$, with $k\in [n]$. Then, for every pair $e_1,e_2\in E^k_n$, there exists $\varphi\in \Aut(\pmn)$ such that $\varphi(e_1)=e_2$.
\end{lem}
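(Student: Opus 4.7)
The plan is to use the standard fact about Cayley graphs that the underlying group acts on itself by right translation and that these translations are automorphisms preserving edge labels. To that end, I would first show that each flip $r_i$ is realized as left multiplication in $\smn$ by the fixed element $\tau_i := r_i(\wt{e})$, where $\wt{e} = [1,2,\ldots,n]$ is the identity of $\smn$. Concretely, one checks by a direct computation in the wreath product $C_m \wr S_n$ that $r_i(\wt{\pi}) = \tau_i \cdot \wt{\pi}$ for every $\wt{\pi} \in \smn$: flipping rearranges the first $i$ entries and increments their signs by $1$, and this is exactly the effect of left multiplication by $\tau_i$.

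Next, for any $h \in \smn$ the right-translation $\varphi_h\colon \wt{\pi} \mapsto \wt{\pi} h$ is a bijection of $\smn$ with inverse $\varphi_{h^{-1}}$, and because left and right multiplications in $\smn$ commute,
\[
\varphi_h\bigl(\wt{\pi}, r_i(\wt{\pi})\bigr) = (\wt{\pi} h,\, \tau_i \wt{\pi} h) = \bigl(\wt{\pi} h,\, r_i(\wt{\pi} h)\bigr),
\]
which is again an edge of $\pmn$ carrying the same label $r_i$. Hence $\varphi_h \in \Aut(\pmn)$ and $\varphi_h(E^k_n) \subseteq E^k_n$. To conclude, given $e_1 = (\wt{\pi}_1, r_k(\wt{\pi}_1))$ and $e_2 = (\wt{\pi}_2, r_k(\wt{\pi}_2))$ in $E^k_n$, set $h := \wt{\pi}_1^{-1} \wt{\pi}_2$; then $\varphi_h(\wt{\pi}_1) = \wt{\pi}_2$ and $\varphi_h(r_k(\wt{\pi}_1)) = \tau_k \wt{\pi}_1 h = \tau_k \wt{\pi}_2 = r_k(\wt{\pi}_2)$, so $\varphi_h(e_1) = e_2$ as required.

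The main obstacle is the first step: verifying that $r_i$ is left translation demands care with the wreath-product convention and the one-line notation used here, and one might equally well arrive at $r_i$ being \emph{right} translation, in which case the roles of left and right multiplication in the remainder of the argument simply swap. Once that realization is pinned down, the rest is just the standard observation that in any Cayley graph the one-sided regular action is a label-preserving, transitive action on the vertex set, which automatically produces transitivity on each label class of edges.
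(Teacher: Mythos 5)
Your proposal is correct and is essentially the paper's own argument: both use the right translation $x\mapsto x\wt{\pi}_1^{-1}\wt{\pi}_2$ as the automorphism carrying $e_1$ to $e_2$. The only difference is that you spell out why this map preserves edges and labels (by identifying $r_i$ with left multiplication by a fixed $\tau_i$ and invoking commutativity of left and right translation), a verification the paper leaves as ``easy to see.''
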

\begin{proof}
Let $e_1,e_2\in E^k_n$. Then the edges $e_1,e_2$ have the form 
$(\wt{\pi}_1,r_k(\wt{\pi}_1))$ and $(\wt{\pi}_2,r_k(\wt{\pi}_2))$, with $\wt{\pi}_1,\wt{\pi}_2\in \smn$. Let $\varphi:V(\pmn)\to V(\pmn)$ be given by $\varphi(x)=x\wt{\pi}_1^{-1}\wt{\pi}_2$. Since $\varphi$ is defined as right multiplication by a specific group element of $S(m,n)$, the underlying group of the Cayley graph $\pmn$, then $\varphi$ is a graph automorphism of $\pmn$. Indeed, $(\wt{\pi},\wt{\pi}')\in E(\pmn)$ if and only if $\wt{\pi}'=r_i(\wt{\pi})$ for some $i\in [n]$, and if and only if $(\varphi(\wt{\pi}),\varphi(\wt{\pi}'))\in E(\pmn)$. Moreover,  $\varphi(\wt{\pi}_1)=\wt{\pi}_2$ and 
$\varphi(r_k(\wt{\pi}_1))=r_k(\wt{\pi}_2)$, and so $\varphi(e_1)=e_2$.
\end{proof} 

One of the questions that has been of interest relating to pancake graphs regards the size of cycles that can be embedded in these graphs. For instance, both the pancake and the burnt pancake graphs are \emph{weakly pancyclic}, that is, they are $6$-pancyclic and $8$-pancyclic, respectively~\cite{BBP19,KF95}. Moreover, small cycles have been classified for pancake graphs in~\cite{KM11,KM10,KM16} and for  burnt pancake graphs in~\cite{BBP19, BBP19Perm}. Recent results demonstrate a Hamiltonian cycle within the generalized pancake graphs~\cite{CSW21}.

The property of weakly-pancyclic has been of interest due to their relation to interconnection networks, see~\cite{KF95}. Unfortunately, the lengths of cycles that can be embedded in $\pmn$ seems to be a more idiosyncratic with a pattern that is harder to discern, as shown in Table~\ref{tab:directed-cycles}.

\begin{table}[ht!]
\begin{tabular}{|r|r|l|}
\hline
\multicolumn{1}{|l|}{$m$} & \multicolumn{1}{l|}{$n$} & length of cycles \\ \hline
3                       & 2                 & $[3,18] \smallsetminus \{5,7,11,16,17\}$\\ \hline 
4                       & 2                 & $\{2k \mid k \in [2,16]\}\smallsetminus\{30\}$\\ \hline 
5                       & 2                      &$[5,50] \smallsetminus \{7,9,13,47,48,49\}$\\ \hline 
6                       & 2                      & $\{2k \mid k\in[2,36]\}$                   \\ \hline
\end{tabular}
\caption{Lengths of cycles in $\pmn$ for different values of $m,n$. Notice that $\p{4}{2}$ has all cycles of even length except for 30}
\label{tab:directed-cycles}
\end{table}

In the undirected case, $\upmn$, we do find that there is a clear pattern for the lengths of the cycles in some of these graphs, as shown in Table~\ref{tab:undirected-cycles}. For example, we will show $\up3n$ is pancyclic. 

\begin{table}[ht!]
\begin{tabular}{|r|r|l|}
\hline
\multicolumn{1}{|l|}{$m$} & \multicolumn{1}{l|}{$n$} & length of cycles \\ \hline
2                       & 3                      & $[8,48]$\\ \hline
3                       & 2                      & $[3,18]$\\ \hline
3                       & 3                      & $[3,162]$\\ \hline
4                       & 2                      & $\{2k \mid k \in [2,16]\}$\\ \hline 
5                       & 2                      & $[5,50]$\\ \hline
\end{tabular}
\caption{Lengths of cycles in $\upmn$ for different values of $m,n$.}
\label{tab:undirected-cycles}
\end{table}

\pgfplotsset{compat=1.15}
\usetikzlibrary{arrows}
\definecolor{qqttqq}{rgb}{0.,0.2,0.}
\definecolor{qqwuqq}{rgb}{0.,0.39215686274509803,0.}
\definecolor{ccqqqq}{rgb}{0.8,0,0}
\definecolor{ududff}{rgb}{0.30196078431372547,0.30196078431372547,1}
\begin{figure}
    \centering
\begin{tikzpicture}[line cap=round,line join=round,>=triangle 45,x=1cm,y=1cm]
\draw [->,line width=0.5pt,color=ccqqqq] (-12.,6.) -- (-10.,6.);
\draw [->,line width=0.5pt,color=ccqqqq,dashed] (-10.,6.) -- (-11.,8.);
\draw [->,line width=0.5pt,color=ccqqqq] (-11.,8.) -- (-12.,6.);
\draw [->,line width=0.5pt,color=qqwuqq] (-14.,7.) -- (-16.,6.);
\draw [->,line width=0.5pt,color=ccqqqq] (-14.,7.) -- (-13.,9.);
\draw [->,line width=0.5pt,color=ccqqqq,dashed] (-13.,9.) -- (-15.,9.);
\draw [->,line width=0.5pt,color=ccqqqq] (-15.,9.) -- (-14.,7.);
\draw [->,line width=0.5pt,color=ccqqqq] (-16.,6.) -- (-17.,8.);
\draw [->,line width=0.5pt,color=ccqqqq,dashed] (-17.,8.) -- (-18.,6.);
\draw [->,line width=0.5pt,color=ccqqqq] (-18.,6.) -- (-16.,6.);
\draw [->,line width=0.5pt,color=ccqqqq] (-16.,4.) -- (-18.,4.);
\draw [->,line width=0.5pt,color=ccqqqq,dashed] (-18.,4.) -- (-17.,2.);
\draw [->,line width=0.5pt,color=ccqqqq] (-17.,2.) -- (-16.,4.);
\draw [->,line width=0.5pt,color=ccqqqq] (-14.,3.) -- (-15.,1.);
\draw [->,line width=0.5pt,color=ccqqqq,dashed] (-15.,1.) -- (-13.,1.);
\draw [->,line width=0.5pt,color=ccqqqq] (-13.,1.) -- (-14.,3.);
\draw [->,line width=0.5pt,color=ccqqqq] (-12.,4.) -- (-11.,2.);
\draw [->,line width=0.5pt,color=ccqqqq,dashed] (-11.,2.) -- (-10.,4.);
\draw [->,line width=0.5pt,color=ccqqqq] (-10.,4.) -- (-12.,4.);
\draw [->,line width=0.5pt,color=qqttqq,dashed] (-10.,6.) -- (-10.,4.);
\draw [->,line width=0.5pt,color=qqttqq] (-10.,4.) -- (-17.,8.);
\draw [->,line width=0.5pt,color=qqttqq,dashed] (-17.,8.) -- (-15.,9.);
\draw [->,line width=0.5pt,color=qqttqq] (-15.,9.) -- (-15.,1.);
\draw [->,line width=0.5pt,color=qqttqq,dashed] (-15.,1.) -- (-17.,2.);
\draw [->,line width=0.5pt,color=qqttqq] (-17.,2.) -- (-10.,6.);
\draw [->,line width=0.5pt,color=qqttqq,dashed] (-11.,2.) -- (-13.,1.);
\draw [->,line width=0.5pt,color=qqttqq] (-13.,1.) -- (-13.,9.);
\draw [->,line width=0.5pt,color=qqttqq,dashed] (-13.,9.) -- (-11.,8.);
\draw [->,line width=0.5pt,color=qqttqq] (-11.,8.) -- (-18.,4.);
\draw [->,line width=0.5pt,color=qqttqq,dashed] (-18.,4.) -- (-18.,6.);
\draw [->,line width=0.5pt,color=qqttqq] (-18.,6.) -- (-11.,2.);
\draw [->,line width=0.5pt,color=qqwuqq] (-12.,6.) -- (-14.,7.);
\draw [->,line width=0.5pt,color=qqwuqq] (-16.,6.) -- (-16.,4.);
\draw [->,line width=0.5pt,color=qqwuqq] (-16.,4.) -- (-14.,3.);
\draw [->,line width=0.5pt,color=qqwuqq] (-14.,3.) -- (-12.,4.);
\draw [->,line width=0.5pt,color=qqwuqq] (-12.,4.) -- (-12.,6.);
\begin{scriptsize}
\draw [fill=ududff] (-12.,6.) circle (2.5pt);
\draw[color=ududff] (-12.470927732826164,5.960200078854992) node {$2^21^0$};
\draw [fill=ududff] (-10.,6.) circle (2.5pt);
\draw[color=ududff] (-9.615837700803436,6.219133048086954) node {$2^01^0$};
\draw [fill=ududff] (-11.,8.) circle (2.5pt);
\draw[color=ududff] (-10.779022528664548,8.31678232750867) node {$2^11^0$};
\draw [fill=ududff] (-13.,1.) circle (2.5pt);
\draw[color=ududff] (-12.727734772743553,0.6600159486987923) node {$2^21^1$};
\draw [fill=ududff] (-17.,8.) circle (2.5pt);
\draw[color=ududff] (-17.33515519479081,8.237632) node {$2^21^2$};
\draw [fill=ududff] (-14.,3.) circle (2.5pt);
\draw[color=ududff] (-14.087301454659137,3.427661162009757) node {$2^01^1$};
\draw [fill=ududff] (-15.,1.) circle (2.5pt);
\draw[color=ududff] (-15.326017764848892,0.6146970593016062) node {$2^11^1$};
\draw [fill=ududff] (-16.,6.) circle (2.5pt);
\draw[color=ududff] (-15.524100362355025,5.929987485923535) node {$2^11^2$};
\draw [fill=ududff] (-18.,6.) circle (2.5pt);
\draw[color=ududff] (-18.468127429720465,6.152963786963938) node {$2^01^2$};
\draw [fill=ududff] (-17.,2.) circle (2.5pt);
\draw[color=ududff] (-17.35026149125654,1.6872441083683439) node {$1^22^2$};
\draw [fill=ududff] (-16.,4.) circle (2.5pt);
\draw[color=ududff] (-15.619206658820753,4.313613764090564) node {$1^02^2$};
\draw [fill=ududff] (-18.,4.) circle (2.5pt);
\draw[color=ududff] (-18.447277057994635,3.966168945378804) node {$1^12^2$};
\draw [fill=ududff] (-13.,9.) circle (2.5pt);
\draw[color=ududff] (-12.939222923263754,9.37422308010968) node {$1^22^0$};
\draw [fill=ududff] (-10.,4.) circle (2.5pt);
\draw[color=ududff] (-9.630943997269165,3.66404301606423) node {$1^12^1$};
\draw [fill=ududff] (-15.,9.) circle (2.5pt);
\draw[color=ududff] (-14.99367924260286,9.37422308010968) node {$1^02^0$};
\draw [fill=ududff] (-11.,2.) circle (2.5pt);
\draw[color=ududff] (-10.733703639267361,1.807669294231259) node {$1^02^1$};
\draw [fill=ududff] (-14.,7.) circle (2.5pt);
\draw[color=ududff] (-14.102407751124865,6.6910463844700715) node {$1^12^0$};
\draw [fill=ududff] (-12.,4.) circle (2.5pt);
\draw[color=ududff] (-12.470927732826164,4.35893265348775) node {$1^22^1$};
\end{scriptsize}
\end{tikzpicture}
    \caption{Generalized pancake graph with three signs and 2 symbols, $\p32$. The ``base cycle," starting from $1^02^0$, is depicted with the dashed edges and the red edges depict the six copies of $\p31$, all of them a directed 3-cycle, embedded into $\p32$. A computer search shows that there is not a 5-cycle in $\p32$ (or cycles of length 7, 11, and 17.) Nonetheless, $\up32$ is pancyclic.}
    \label{fig:\p32}
\end{figure}
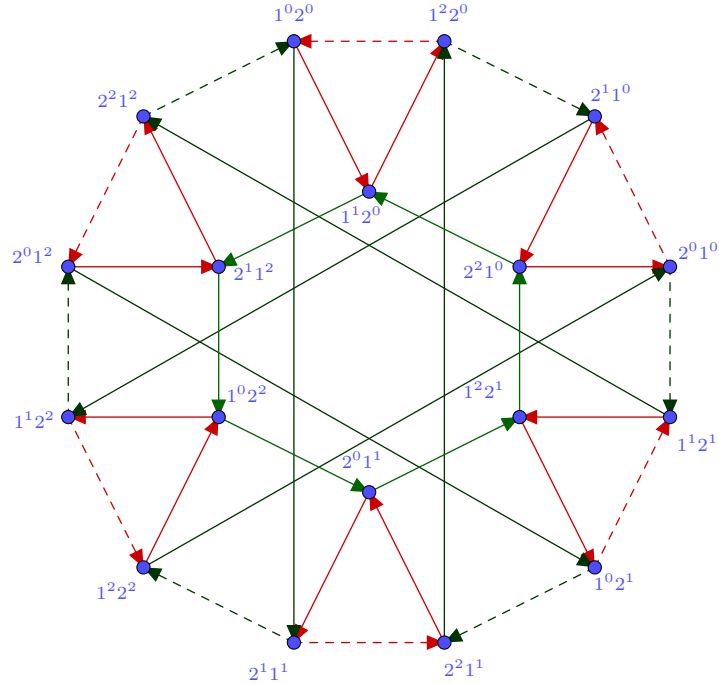

\definecolor{ccqqqq}{rgb}{0.8,0,0}
\definecolor{wrwrwr}{rgb}{0.3803921568627451,0.3803921568627451,0.3803921568627451}
\definecolor{rvwvcq}{rgb}{0.08235294117647059,0.396078431372549,0.7529411764705882}

\begin{figure}
\centering
\begin{tikzpicture}[line cap=round,line join=round,>=triangle 45,x=1cm,y=1cm,scale=0.91]
\draw [->,line width=0.5pt,dash pattern=on 3pt off 3pt,color=wrwrwr] (2.,-6.) -- (2.,-4.);
\draw [->,line width=0.5pt,dash pattern=on 3pt off 3pt,color=wrwrwr] (4.,-4.) -- (4.,-2.);
\draw [->,line width=0.5pt,dash pattern=on 3pt off 3pt,color=wrwrwr] (4.,0.) -- (2.,0.);
\draw [->,line width=0.5pt,dash pattern=on 3pt off 3pt,color=wrwrwr] (2.,2.) -- (0.,2.);
\draw [->,line width=0.5pt,dash pattern=on 3pt off 3pt,color=wrwrwr] (-2.,2.) -- (-2.,0.);
\draw [->,line width=0.5pt,dash pattern=on 3pt off 3pt,color=wrwrwr] (-4.,0.) -- (-4.,-2.);
\draw [->,line width=0.5pt,dash pattern=on 3pt off 3pt,color=wrwrwr] (-4.,-4.) -- (-2.,-4.);
\draw [->,line width=0.5pt,dash pattern=on 3pt off 3pt,color=wrwrwr] (-2.,-6.) -- (0.,-6.);
\draw [->,line width=0.5pt,color=ccqqqq] (-2.,2.) -- (-2.,4.);
\draw [->,line width=0.5pt,color=wrwrwr] (-2.,0.) -- (2.,-6.);
\draw [->,line width=0.5pt,color=wrwrwr] (-4.,-2.) -- (4.,-4.);
\draw [->,line width=0.5pt,color=wrwrwr] (-6.,2.) -- (0.,4.);
\draw [->,line width=0.5pt,color=wrwrwr] (-2.,4.) -- (4.,2.);
\draw [->,line width=0.5pt,color=wrwrwr] (0.,4.) -- (6.,-6.);
\draw [->,line width=0.5pt,color=wrwrwr] (-4.,2.) -- (6.,0.);
\draw [->,line width=0.5pt,color=wrwrwr] (-6.,-2.) -- (4.,4.);
\draw [->,line width=0.5pt,color=wrwrwr] (-6.,-4.) -- (-4.,2.);
\draw [->,line width=0.5pt,color=wrwrwr] (-4.,-6.) -- (-2.,4.);
\draw [->,line width=0.5pt,color=wrwrwr] (-4.,-8.) -- (-6.,-2.);
\draw [->,line width=0.5pt,color=wrwrwr] (0.04344806595431746,-8.04547839077213) -- (-6.,2.);
\draw [->,line width=0.5pt,color=wrwrwr] (2.,-8.) -- (-4.,-6.);
\draw [->,line width=0.5pt,color=wrwrwr] (6.,-6.) -- (0.04344806595431727,-8.04547839077213);
\draw [->,line width=0.5pt,color=wrwrwr] (4.,-6.) -- (-6.,-4.);
\draw [->,line width=0.5pt,color=wrwrwr] (6.,-2.) -- (-4.,-8.);
\draw [->,line width=0.5pt,color=wrwrwr] (6.,0.) -- (4.,-6.);
\draw [->,line width=0.5pt,color=wrwrwr] (4.,2.) -- (2.,-8.);
\draw [->,line width=0.5pt,color=wrwrwr] (4.,4.) -- (6.,-2.);
\draw [->,line width=0.5pt,color=wrwrwr] (2.,-4.) -- (-2.,2.);
\draw [->,line width=0.5pt,color=wrwrwr] (4.,-2.) -- (-4.,0.);
\draw [->,line width=0.5pt,color=wrwrwr] (2.,0.) -- (-4.,-4.);
\draw [->,line width=0.5pt,color=wrwrwr] (0.,-6.) -- (2.,2.);
\draw [->,line width=0.5pt,color=wrwrwr] (-2.,-4.) -- (4.,0.);
\draw [->,line width=0.5pt,color=wrwrwr] (0.,2.) -- (-2.,-6.);
\draw [->,line width=0.5pt,dash pattern=on 3pt off 3pt,color=ccqqqq] (-4.,-4.) -- (-4.,-2.);
\draw [->,line width=0.5pt,dash pattern=on 3pt off 3pt,color=ccqqqq] (-4.,0.) -- (-2.,0.);
\draw [->,line width=0.5pt,dash pattern=on 3pt off 3pt,color=ccqqqq] (-2.,2.) -- (0.,2.);
\draw [->,line width=0.5pt,dash pattern=on 3pt off 3pt,color=ccqqqq] (2.,2.) -- (2.,0.);
\draw [->,line width=0.5pt,dash pattern=on 3pt off 3pt,color=ccqqqq] (4.,0.) -- (4.,-2.);
\draw [->,line width=0.5pt,dash pattern=on 3pt off 3pt,color=ccqqqq] (4.,-4.) -- (2.,-4.);
\draw [->,line width=0.5pt,dash pattern=on 3pt off 3pt,color=ccqqqq] (2.,-6.) -- (0.,-6.);
\draw [->,line width=0.5pt,dash pattern=on 3pt off 3pt,color=ccqqqq] (-2.,-6.) -- (-2.,-4.);
\draw [->,line width=0.5pt,color=ccqqqq] (0.,2.) -- (0.,4.);
\draw [->,line width=0.5pt,color=ccqqqq] (0.,4.) -- (-2.,4.);
\draw [->,line width=0.5pt,color=ccqqqq] (-2.,0.) -- (-4.,2.);
\draw [->,line width=0.5pt,color=ccqqqq] (-4.,2.) -- (-6.,2.);
\draw [->,line width=0.5pt,color=ccqqqq] (-6.,2.) -- (-4.,0.);
\draw [->,line width=0.5pt,color=ccqqqq] (-4.,-2.) -- (-6.,-2.);
\draw [->,line width=0.5pt,color=ccqqqq] (-6.,-2.) -- (-6.,-4.);
\draw [->,line width=0.5pt,color=ccqqqq] (-6.,-4.) -- (-4.,-4.);
\draw [->,line width=0.5pt,color=ccqqqq] (-2.,-4.) -- (-4.,-6.);
\draw [->,line width=0.5pt,color=ccqqqq] (-4.,-6.) -- (-4.,-8.);
\draw [->,line width=0.5pt,color=ccqqqq] (-4.,-8.) -- (-2.,-6.);
\draw [->,line width=0.5pt,color=ccqqqq] (0.,-6.) -- (0.04344806595431746,-8.04547839077213);
\draw [->,line width=0.5pt,color=ccqqqq] (0.04344806595431746,-8.04547839077213) -- (2.,-8.);
\draw [->,line width=0.5pt,color=ccqqqq] (2.,-8.) -- (2.,-6.);
\draw [->,line width=0.5pt,color=ccqqqq] (2.,-4.) -- (4.,-6.);
\draw [->,line width=0.5pt,color=ccqqqq] (4.,-6.) -- (6.,-6.);
\draw [->,line width=0.5pt,color=ccqqqq] (6.,-6.) -- (4.,-4.);
\draw [->,line width=0.5pt,color=ccqqqq] (4.,-2.) -- (6.,-2.);
\draw [->,line width=0.5pt,color=ccqqqq] (6.,-2.) -- (6.,0.);
\draw [->,line width=0.5pt,color=ccqqqq] (6.,0.) -- (4.,0.);
\draw [->,line width=0.5pt,color=ccqqqq] (2.,0.) -- (4.,2.);
\draw [->,line width=0.5pt,color=ccqqqq] (4.,2.) -- (4.,4.);
\draw [->,line width=0.5pt,color=ccqqqq] (4.,4.) -- (2.,2.);
\begin{scriptsize}
\draw [fill=rvwvcq] (0.,-6.) circle (2.5pt);
\draw[color=rvwvcq] (0.3302818931299044,-6.23948762707398) node {$1^32^2$};
\draw [fill=rvwvcq] (2.,-4.) circle (2.5pt);
\draw[color=rvwvcq] (1.5288885507041743,-4.0385822588635037) node {$2^31^3$};
\draw [fill=rvwvcq] (4.,-2.) circle (2.5pt);
\draw[color=rvwvcq] (4.424708523426984,-1.6801707909753365) node {$1^02^3$};
\draw [fill=rvwvcq] (2.,0.) circle (2.5pt);
\draw[color=rvwvcq] (1.981309254894196,-0.3903415819506756) node {$2^01^0$};
\draw [fill=rvwvcq] (0.,2.) circle (2.5pt);
\draw[color=rvwvcq] (-0.324047142324131,2.3779966898052034) node {$1^12^0$};
\draw [fill=rvwvcq] (-2.,0.) circle (2.5pt);
\draw[color=rvwvcq] (-1.4757088705682435,0.09210352095010826) node {$2^11^1$};
\draw [fill=rvwvcq] (-4.,-2.) circle (2.5pt);
\draw[color=rvwvcq] (-4.577763594096827,-2.2238140466157494) node {$1^22^1$};
\draw [fill=rvwvcq] (-2.,-4.) circle (2.5pt);
\draw[color=rvwvcq] (-2.1193521262086575,-3.6074085048346373) node {$2^21^2$};
\draw [fill=rvwvcq] (4.,-4.) circle (2.5pt);
\draw[color=rvwvcq] (3.6660530684311894,-4.278533461441979) node {$2^21^3$};
\draw [fill=rvwvcq] (4.,0.) circle (2.5pt);
\draw[color=rvwvcq] (4.388449373910448,-0.24784768162836632) node {$1^32^3$};
\draw [fill=rvwvcq] (2.,2.) circle (2.5pt);
\draw[color=rvwvcq] (2.633729959084218,2.046822935776337) node {$2^31^0$};
\draw [fill=rvwvcq] (-2.,2.) circle (2.5pt);
\draw[color=rvwvcq] (-1.5881783721797903,2.2505149888382753) node {$1^02^0$};
\draw [fill=rvwvcq] (-4.,0.) circle (2.5pt);
\draw[color=rvwvcq] (-3.812873388295259,0.4170425241732014) node {$2^01^1$};
\draw [fill=rvwvcq] (-4.,-4.) circle (2.5pt);
\draw[color=rvwvcq] (-4.441504444580291,-3.719878006446184) node {$1^12^1$};
\draw [fill=rvwvcq] (2.,-6.) circle (2.5pt);
\draw[color=rvwvcq] (1.513876351348793,-6.196993726751671) node {$1^22^2$};
\draw [fill=rvwvcq] (-2.,-6.) circle (2.5pt);
\draw[color=rvwvcq] (-2.6442911294317506,-5.823326072400495) node {$2^11^2$};
\draw [fill=rvwvcq] (0.04344806595431746,-8.04547839077213) circle (2.5pt);
\draw[color=rvwvcq] (-0.28587966154358135,-8.300441792705984) node {$1^02^2$};
\draw [fill=rvwvcq] (2.,-8.) circle (2.5pt);
\draw[color=rvwvcq] (2.312483008923063,-8.300441792705984) node {$1^12^2$};
\draw [fill=rvwvcq] (4.,-6.) circle (2.5pt);
\draw[color=rvwvcq] (3.687300018592344,-6.296993726751671) node {$2^01^3$};
\draw [fill=rvwvcq] (6.,-6.) circle (2.5pt);
\draw[color=rvwvcq] (6.3006748884143695,-5.559585221917031) node {$2^11^3$};
\draw [fill=rvwvcq] (6.,-2.) circle (2.5pt);
\draw[color=rvwvcq] (6.4006748884143695,-1.6439116414588006) node {$1^12^3$};
\draw [fill=rvwvcq] (6.,0.) circle (2.5pt);
\draw[color=rvwvcq] (6.470650489703607,0.13459742127241758) node {$1^22^3$};
\draw [fill=rvwvcq] (-6.,-2.) circle (2.5pt);
\draw[color=rvwvcq] (-6.487446410856914,-1.8413689437149658) node {$1^32^1$};
\draw [fill=rvwvcq] (-6.,-4.) circle (2.5pt);
\draw[color=rvwvcq] (-6.444952510534604,-3.987310909991586) node {$1^02^1$};
\draw [fill=rvwvcq] (4.,2.) circle (2.5pt);
\draw[color=rvwvcq] (4.382214623104675,1.855600384325945) node {$2^11^0$};
\draw [fill=rvwvcq] (4.,4.) circle (2.5pt);
\draw[color=rvwvcq] (4.303461573265829,4.267728303986813) node {$2^21^0$};
\draw [fill=rvwvcq] (0.,4.) circle (2.5pt);
\draw[color=rvwvcq] (0.3090349429687497,4.267728303986813) node {$1^22^0$};
\draw [fill=rvwvcq] (-2.,4.) circle (2.5pt);
\draw[color=rvwvcq] (-1.9218948239524918,4.263963054792587) node {$1^32^0$};
\draw [fill=rvwvcq] (-4.,2.) circle (2.5pt);
\draw[color=rvwvcq] (-3.6853916873283303,2.3505149888382753) node {$2^21^1$};
\draw [fill=rvwvcq] (-6.,2.) circle (2.5pt);
\draw[color=rvwvcq] (-6.48121166005114,2.1318107364209555) node {$2^31^1$};
\draw [fill=rvwvcq] (-4.,-6.) circle (2.5pt);
\draw[color=rvwvcq] (-4.471528843291053,-5.410856570788948) node {$2^31^2$};
\draw [fill=rvwvcq] (-4.,-8.) circle (2.5pt);
\draw[color=rvwvcq] (-4.131577640712579,-8.279194842544829) node {$2^01^2$};
\end{scriptsize}
\end{tikzpicture}
    \caption{Generalized pancake graph with four signs and 2 symbols, $\p{4}{2}$. The ``base cycle," starting from $1^02^0$, is shown in dashed lines and in red edges show the embedded copies of $\p{4}{1}$. A computer search verifies that $\p{4}{2}$ has all cycles of length $4,6,\ldots,32$, except for a cycle of length 30. The corresponding undirected graph does have a cycle of length 30.}
    \label{fig:\p42}
\end{figure}
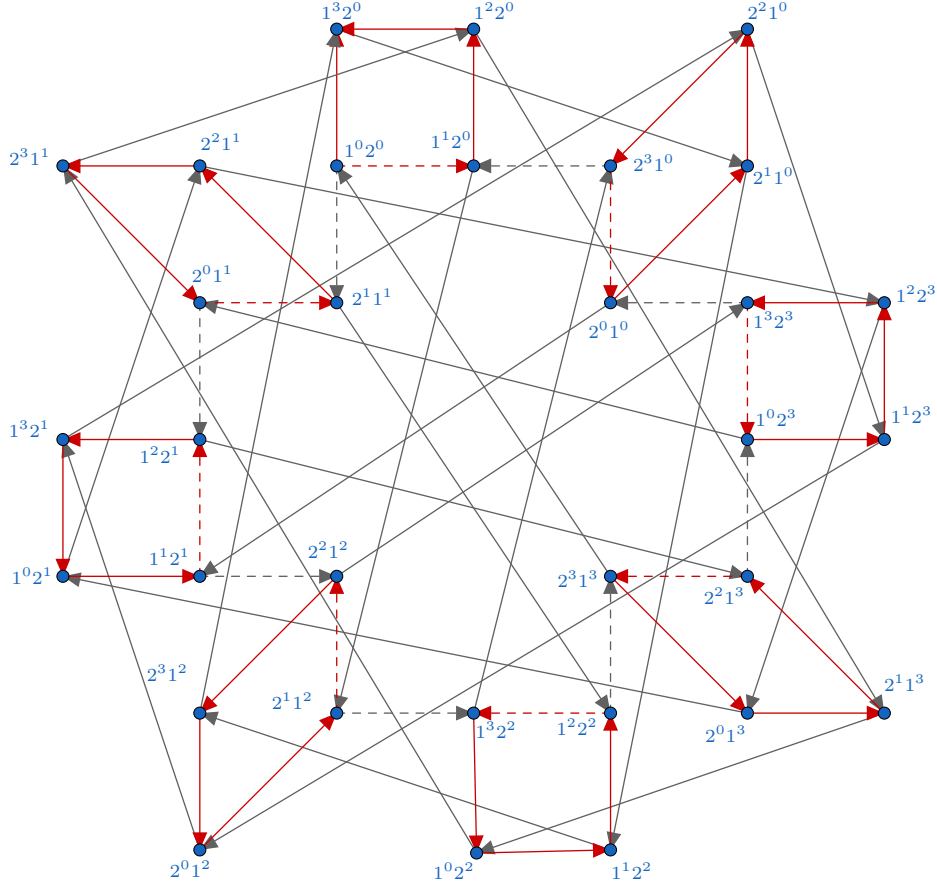

\subsection{Hierarchical structure of \texorpdfstring{$\pmn$}{P(m,n)} and \texorpdfstring{$\upmn$}{Pm(n)}}\label{sec:hier}

One of the more remarkable features of the pancakes graphs (see~\cite{BBP19, KF95}) is their hierarchical structure. If one considers the induced subgraph whose vertices share a fixed last symbol (which is equivalent to thinking of all the stacks with the same bottom pancake), then the resulting induced graph is isomorphic to the pancake graph with $n-1$ symbols. In a similar fashion, by looking at the induced subgraph obtained from $\pmn$ and $\upmn$ by leaving the last signed-symbol fixed, we obtain a graph isomorphic to $\p{m}{n-1}$ and $\up{m}{n-1}$, respectively. To illustrate the concept, let us observe that in Figure~\ref{fig:\p32}, there are six copies of $\p{3}{1}$ and $\up{3}{1}$ (they appear as a directed and undirected triangle) inside of $\p{3}{2}$ and $\up{3}{2}$, respectively. In general, since there are $mn$ potential last signed-symbols, then there are $mn$ non-overlapping copies of $\p{m}{n-1}$ and $\up{m}{n-1}$ embedded into $\pmn$ and $\upmn$, with $n\geq2$. The only edges connecting between these copies would be the edges labeled by $r_n$ or $\flop_n$.

There is a particular cycle connecting all these copies of $\up{m}{n-1}$ in $\upmn$ that is of interest. This cycle will be relevant when we characterize the length of the cycles in $\up{3}{n}$ and $\up{4}{n}$. We shall refer to this cycle, as the \emph{base cycle}.

\subsection{Base cycle} For $n>2$ and referring to the edge labels, we define the cycle
 
\[\C:=(r_{n-1}\flop_n)^{mn}.\]

\begin{lem}\label{lem:basecycle}
$\C$ is a cycle in $\upmn$ of length $2mn$.
\end{lem}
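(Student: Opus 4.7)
My plan is to analyze the bijection $\sigma := \flop_n \circ r_{n-1}$ of $\smn$ that arises from one period of $\C$. First, I would compute directly from the definitions of $r_{n-1}$ and $\flop_n$ that
\[
\sigma\bigl([\zeta^{j_1}\pi_1,\ldots,\zeta^{j_n}\pi_n]\bigr) = [\zeta^{j_n-1}\pi_n,\zeta^{j_1}\pi_1,\ldots,\zeta^{j_{n-1}}\pi_{n-1}],
\]
so $\sigma$ acts as a cyclic right-rotation of the signed symbols in which only the element newly placed at position~$1$ has its sign decremented, by one, while all other signs are unchanged. A short induction then shows that $\sigma^n$ preserves the underlying permutation while decrementing every sign by exactly one, and hence $\sigma^{mn}$ is the identity of $\smn$. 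This gives a closed walk of length $2mn$ in $\upmn$ obtained by alternating $r_{n-1}$ and $\flop_n$.

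It remains to show that the $2mn$ vertices of this walk are pairwise distinct. Using vertex-transitivity of $\upmn$, I would start from $e = [\zeta^0 1,\zeta^0 2,\ldots,\zeta^0 n]$. Writing $k = qn + r$ with $0 \leq q < m$ and $0 \leq r < n$, the closed form for $\sigma^k(e)$ has last coordinate $\zeta^{-q}(n-r)$ when $r > 0$ and $\zeta^{-q}n$ when $r=0$; these $mn$ signed symbols are pairwise distinct, so the vertices $\sigma^k(e)$ with $0\leq k < mn$ are pairwise distinct. The intermediate vertices along $\C$ have the form $r_{n-1}(\sigma^k(e))$ and share the last coordinate of $\sigma^k(e)$ because $r_{n-1}$ fixes position~$n$. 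Hence the only way an intermediate vertex could coincide with some $\sigma^l(e)$ would be if $l=k$ and $r_{n-1}$ fixed $\sigma^k(e)$. The latter is impossible when $n>2$: $r_{n-1}$ sends the symbol in position~$1$ to position~$n-1$, and these positions of any signed permutation carry distinct symbols. Distinctness of the intermediate vertices from one another then follows from the bijectivity of $r_{n-1}$.

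The main obstacle I anticipate is the sign bookkeeping required to write down the closed form of $\sigma^k$ cleanly; once that is in place, the remaining arguments reduce to inspecting position~$n$ and to using the hypothesis $n>2$ to rule out fixed points of $r_{n-1}$.
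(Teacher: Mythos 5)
Your proof is correct and follows essentially the same route as the paper's: both compute the action of one flip--flop period on the identity, observe that it cyclically rotates the signed symbols while adjusting exactly one sign per step, and conclude that the composite operator has order $mn$. You are in fact somewhat more careful than the paper, which stops at the order computation and leaves the pairwise distinctness of the $2mn$ vertices (in particular of the intermediate vertices $r_{n-1}(\sigma^k(e))$) implicit.
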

\begin{proof}
That the length of the path  $(r_{n-1}\flop_n)^{mn}$ is $2mn$ is straightforward. 

To show that $(r_{n-1}\flop_n)^{mn}$ is indeed a cycle, let us consider the effect of applying $(r_{n-1}\flop_n)^{n}$ to the identity $e=1^02^0\cdots n^0 \in \smn$. Notice that the effect on the signed-symbols of $e$ after applying $(r_{n-1}\flop_n)^{i}$, with $i \in [n]$, is
\[
    (r_{n-1}\flop_{n})^{i}(e) = (i+1)^0(i+2)^0 \cdots n^0 1^{m-1}2^{m-1} \cdots i^{m-1}
\]
This is because $\flop_n$ reverses all the symbols while decrementing their signs and then $r_{n-1}$ reverses the first $n-1$ symbols and increases their signs back to zero. Thus, after each flop-flip the original first symbol is at the end of the generalized permutation with the sign reduced by one while the other $n-1$ symbols move to the left in their same relative order and original sign. Therefore, after applying $r_{n-1}\flop_n$ a total of $n$ times beginning with the generalized permutation $e$ the symbols will end up in their original position while the signs of each symbol are now $m-1$. Thus to cycle back to the original sign of zero requires $m$ iterations of $(r_{n-1}\flop_n)^n$ to $e$.
 
Therefore, the order of $r_{n-1}\flop_n$ is $mn$. 
\end{proof} We depict the base cycles for $\p32$ and $\p42$ in Figure~\ref{fig:\p32} and~\ref{fig:\p42}.

\section{The girth of \texorpdfstring{$\up{m}{n}$}{Pm(n)}}\label{girth}\label{sec:girth}

In this section, we study the girth of the undirected generalized pancake graphs. We shall consider the cases when $n=2$ and when $n>2$ separately. These two cases are distinguished by the girths of their quotient graphs. For a given graph $\Gamma=(V,E)$ and an equivalence relation $\sim$ on $V$, the \emph{quotient graph} $\Gamma/{\sim}$ is the graph whose vertex set is the quotient set $V/{\sim}$ and whose edge set is $\{([u],[v]) \mid (u,v) \in E\}$. That is vertices in the quotient graph are connected if any of the vertices in two equivalence classes have an edge between them. 

A natural quotient graph to consider amongst generalized pancake graphs is the one with respect to the equivalence relation $\sim_n$ that identifies generalized permutations with the same last signed-symbol, that is the quotient would be modulo the $mn$-copies of $\up{m}{n-1}$ that are found in $\up{m}{n}$. We shall denote this quotient graph by $\up{m}{n}/\up{m}{n-1}$. As noted in Section~\ref{sec:hier}, there are $mn$ vertices in this quotient graph and the edges connecting would correspond to edges labeled by $r_n$ and $\flop_n$. Furthermore, for $n \geq 2$, whenever the last symbols matches in two generalized permutations there is not a single flip/flop of length $n$ that can connect them, thus the quotient graph will be a complete multipartite graph where there are $n$ parts, those with matching last symbol, with $m$ vertices within each part, one for each sign. It is well known that the girth of a complete bipartite graph with equal size parts is four. It is also straightforward to see that the girth of any complete multipartite graph, with more than two parts, is three.

The following is a useful observation connecting cycles in $\up{m}{n}$ and the quotient graph $\up{m}{n} /\up{m}{n-1}$.

\begin{lem}\label{lem:quotient}
If $C=(V_C,E_C)$ is a cycle in $\up{m}{n}$ consisting of edges labeled by $R_n$, where $R_n=\{r_n,\flop_n\}$, then $C'=(V_C/\sim_n,\{([u],[v])\mid u,v\in E_C\})$ is a cycle in $\up{m}{n} /\up{m}{n-1}$.
\end{lem}

\begin{proof}
Let us denote $C$ by a sequence of vertices connected by edges where the first and last vertex are the same.  Without loss of generality, $C$ has the form $C=v_0R_nS_1R_n\cdots S_kR_nv_0$ where each $S_i$ is a sequence of vertices and edges from\[\{r_1,\flop_1,\ldots,r_{n-1},\flop_{n-1}\}\cup S(m,n).\] Since $C$ is a cycle, each $S_i$ lies in a different copy of $\up{m}{n-1}$ embedded in $\up{m}{n}$. Under the equivalence relation $\sim_n$, each $S_i$ becomes a vertex $v_{i}$ of $\up{m}{n}/\up{m}{n-1}$. Since each $C$ is a cycle, $S_i$ and $S_j$ belong to different copies of $\up{m}{n-1}$ embedded in $\up{m}{n}$ if $i\neq j$. Therefore, $v_{i}\neq v_{j}$ if $i\neq j$, and therefore $C'=v_0R_nv_1R_n\cdots v_kR_nv_0$ is a cycle in $\up{m}{n} /\up{m}{n-1}$.
\end{proof}

\begin{thm}\label{t:n=2girths}
For $m \geq 3$, the girth of $\up{m}{2}$ is $\min\{m,6\}$.
\end{thm}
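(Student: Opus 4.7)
My plan is to prove the equality $\mathrm{girth}(\up{m}{2}) = \min\{m,6\}$ in two parts: first exhibit a cycle of length $\min\{m,6\}$, then rule out any shorter cycle.

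For the upper bound I would distinguish two regimes. When $m\in\{3,4,5\}$, the hierarchical structure of Section~\ref{sec:hier} gives an isomorphic copy of $\up{m}{1}$ inside $\up{m}{2}$ (e.g., the induced subgraph on the $m$ vertices with last signed-symbol $2^0$), and $\up{m}{1}$ is itself an $m$-cycle via the $r_1$- and $\flop_1$-edges. When $m\geq 6$, I would instead verify by direct calculation that the sequence
\[ 1^02^0 \xrightarrow{r_2} 2^11^1 \xrightarrow{\flop_1} 2^01^1 \xrightarrow{\flop_1} 2^{m-1}1^1 \xrightarrow{r_2} 1^22^0 \xrightarrow{\flop_1} 1^12^0 \xrightarrow{\flop_1} 1^02^0 \]
is a $6$-cycle; its six vertices are pairwise distinct as soon as $m\geq 3$.

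For the lower bound, I would classify any cycle $C$ of length $k$ by $e_2(C)$, the number of its edges labeled in $R_2 := \{r_2, \flop_2\}$. Since these are exactly the edges that swap the two symbol positions, closing up forces $e_2(C)$ to be even. If $e_2(C)=0$ then $C$ lies inside a single copy of $\up{m}{1}$ and thus has length exactly $m$. If $e_2(C)=2$, I would write $C$ as two intra-class paths $P_u\subseteq [u]$ and $P_v\subseteq [v]$ (with $[u]\neq [v]$) joined by the two $R_2$-edges, and enumerate the four direction-combinations for these edges using the explicit formulas for $r_2$ and $\flop_2$. Two of the four sub-cases make the two $R_2$-edges coincide as an undirected edge, so $C$ backtracks and fails to be a simple cycle; the other two sub-cases force the first-coordinate signs at the endpoints of $P_u$ (respectively $P_v$) to differ by $\pm 2 \pmod m$. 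Since $\up{m}{1}$ is an $m$-cycle, this yields $|P_u|, |P_v| \geq \min(2, m-2)$, hence $k\geq 2\min(2, m-2)+2$.

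The remaining case is $e_2(C)\geq 4$ with $k\leq 5$, leaving $k=4$ (all four edges in $R_2$) or $k=5$ (four $R_2$-edges and one edge in $R_1 := \{r_1, \flop_1\}$). In either case, cyclic non-backtracking forces the four $R_2$-edges to all have the same direction, so the cycle corresponds to the relation $r_2^{4\epsilon}=e$ or $r_1^{\delta} r_2^{4\epsilon}=e$ in $S(m,2)$ for some $\epsilon, \delta \in \{\pm 1\}$; evaluating on $1^02^0$ shows these require $m\mid 4$ and additionally $m\mid 1$ respectively, neither of which holds for $m\geq 5$. Combining all cases yields $k\geq \min\{m,6\}$, matching the upper bound. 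The main technical obstacle is the $e_2(C)=2$ analysis: one must carefully identify which direction-combinations degenerate into a back-and-forth on a single edge, and use the refined bound $\min(2, m-2)$ so that the argument also handles $m=3$ correctly.
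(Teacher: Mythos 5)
Your proposal is correct, and the lower-bound half takes a genuinely different route from the paper's. The paper also splits into $m\le 5$ versus $m>5$ and exhibits an explicit $6$-cycle (namely $\flop_1\flop_2^2r_1r_2^2$, which uses four edges of $R_2$), but for the lower bound it passes to the quotient $\up{m}{2}/\up{m}{1}\cong K_{m,m}$, invokes its girth $4$ to conclude that any ``mixed'' cycle carries at least four edges labeled $r_2$ or $\flop_2$, and then kills $5$-cycles with a sign-parity count. You instead stratify by the (necessarily even) number $e_2(C)$ of $R_2$-edges and handle $e_2=2$ and $e_2=4$ directly by tracking signs. Your route buys something concrete: cycles with exactly two $R_2$-edges genuinely exist --- your own $6$-cycle $1^02^0\xrightarrow{r_2}2^11^1\xrightarrow{\flop_1}2^01^1\xrightarrow{\flop_1}2^{m-1}1^1\xrightarrow{r_2}1^22^0\xrightarrow{\flop_1}1^12^0\xrightarrow{\flop_1}1^02^0$ visits only the two classes $[2^0]$ and $[1^1]$ --- so the quotient-girth step silently skips the two-class case, and your $e_2=2$ analysis (backtracking versus the forced $\pm2$ sign offset, with the bound $|P_u|,|P_v|\ge\min(2,m-2)$ coming from distances in the $m$-cycle $\up{m}{1}$) is exactly what covers it. The paper's quotient argument, in exchange, is shorter and is the one that scales to $n>2$ in Theorem~\ref{t:girth}, where the quotient becomes multipartite; your coordinate-by-coordinate sign bookkeeping is tailored to $n=2$. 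One cosmetic caveat: in your $e_2=4$, $k=5$ case the single $R_1$-edge need not sit at the end of the word, but since a cyclic rotation of a relator is still a relator (and, more robustly, since each symbol's total sign increment must vanish mod $m$, forcing $m\mid 4$ and $m\mid 4\pm1$ simultaneously), your conclusion stands.
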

\begin{proof}
We consider the case $m\leq 5$ first and consider a cycle $C$ in $\up{m}{2}$. If $C$ only uses edges labeled by $r_1$ or $\flop_1$, then $C$ is embedded in one of the copies of $\up{m}{1}$ in $\up{m}{2}$, and so $|C|=m$. Moreover, if $C$ only traverses edges labeled by $r_2$ or $\flop_2$, then $|C|$ is $m$ or $2m$ depending on the parity of $m$. The remaining possibility is that $C$ traverses edges labeled by a combination of $r_1$ or $\flop_1$, and $r_2$ or $\flop_2$. In this case, since the girth of the complete bipartite graph $\up{m}{2}/\up{m}{1}$ is 4, it follows that $|C|\geq5$ (4 edges labeled by $r_2$ or $\flop_2$ and at least one edge labeled by $r_1$ or $\flop_1$.) Therefore, since $(r_1)^m$ is an $m$-cycle, if $m\leq 5$, the girth of $\up{m}{2}$ is $m$. 

Now we consider the case $m > 5$. First, let us notice that $C=\flop_1\flop_2^2r_1r_2^2$ is a 6-cycle in $\up{m}{2}$. Indeed, due to vertex-transitivity, one needs only to see what happens when we apply $C$ to the identity generalized permutation $e=1^02^0$. It can be readily verified that
 \[   (\flop_1\flop_2^2r_1r_2^2)(e) = e.\]

It remains to show that there cannot be a cycle of length five in $\up{m}{2}$, and without loss of generality we assume that the starting vertex is $e$. If there were such a 5-cycle, it has to traverse edges labelled by $r_2$ or $\flop_2$. In other words it must overlap with a cycle within $\up{m}{2}/\up{m}{1}$ by Lemma~\ref{lem:quotient}, which are of length at least 4. Furthermore, since $m > 5$, any 4-cycle in $\up{m}{2}/\up{m}{1}$ must contain two edges labelled $r_2$ and two edges labelled $\flop_2$ to ensure the signs return to their initial value. However, a single $r_1$ or $\flop_1$ within a copy of $\up{m}{1}$ would force the parity of the sign for one symbol to be odd, and thus we could not return to our initial vertex $e$ with five edges since $m>5$. It follows that there is no such 5-cycle, and therefore the girth of $\up{m}{2}$ is $\min\{m,6\}$.
\end{proof}

In the next Theorem, we establish that the girth of $\up{m}{n}$, with $n>2$, is also $\min\{m,6\}$.


\begin{thm}\label{t:girth}
For $m\geq3$ and $n\geq2$, the girth of $\up{m}{n}$ is $\min\{m,6\}$.
\end{thm}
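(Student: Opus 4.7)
The plan is to proceed by induction on $n \geq 2$, with the base case $n = 2$ supplied by Theorem~\ref{t:n=2girths}. For the inductive step $n \geq 3$, I assume the girth of $\up{m}{n-1}$ is $\min\{m, 6\}$ and argue the same for $\upmn$.

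For the upper bound, Section~\ref{sec:hier} embeds $\up{m}{n-1}$ as an induced subgraph of $\upmn$ (the vertices sharing a fixed last signed-symbol). By the induction hypothesis, $\up{m}{n-1}$ contains a cycle of length $\min\{m, 6\}$, and the same cycle lives in $\upmn$.

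For the lower bound, let $C$ be a cycle in $\upmn$. If no edge of $C$ is labeled $r_n$ or $\flop_n$, then $C$ lies within some copy of $\up{m}{n-1}$ and $|C| \geq \min\{m,6\}$ by induction. Otherwise, writing $R_n := \{r_n, \flop_n\}$, Lemma~\ref{lem:quotient} asserts that $C$ projects to a cycle in the quotient $\upmn/\up{m}{n-1}$. For $n \geq 3$ this quotient is a complete multipartite graph with $n$ parts of size $m$ and has girth $3$, so $C$ contains at least three $R_n$-edges. A preliminary bound $|C| \geq 4$ comes from two observations: (i) the composition of three elements of $R_n$ acts on $\smn$ as a reversal of the permutation component, which is not the identity when $n \geq 2$, so $C$ cannot consist of three $R_n$-edges alone; and (ii) for $m \geq 3$ the two edges $(v, r_n(v))$ and $(v, \flop_n(v))$ lie in different copies of $\up{m}{n-1}$, so configurations of $C$ with few $R_n$-edges are forced to have nontrivial internal segments in each visited copy.

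The main obstacle is strengthening $|C| \geq 4$ to $|C| \geq \min\{m, 6\}$ when $m \geq 5$, mirroring the sign-parity technique from the proof of Theorem~\ref{t:n=2girths}. Each $r_n$ (resp.\ $\flop_n$) edge adds $+1$ (resp.\ $-1$) to the signs of the first $n$ positions, while internal edges labeled $r_k$ or $\flop_k$ with $k < n$ only alter signs in positions below $n$. For $C$ to close, the total sign change on each symbol must vanish modulo $m$; when $m \geq 6$ and $|C| \leq 5$, each such total has absolute value at most $5 < m$ and must vanish as an integer, forcing a tight balance between $r_n$- and $\flop_n$-edges. A case analysis over the admissible distributions of $R_n$-edges and internal edges in a putative cycle of length at most $5$ then rules out every configuration---either by violating this balance, by failing the identity condition on the permutation component, or by repeating a vertex. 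The case $m = 5$ is analogous, with sign magnitudes at most $4 < 5$. Extending the Theorem~\ref{t:n=2girths} template across the $n \geq 3$ copies of $\up{m}{n-1}$ now in play is the most delicate portion of the argument.
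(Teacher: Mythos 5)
Your skeleton matches the paper's: induction on $n$ with Theorem~\ref{t:n=2girths} as the base case, the upper bound from an embedded copy, and the lower bound via Lemma~\ref{lem:quotient} and the fact that the quotient $\upmn/\up{m}{n-1}$ is complete multipartite of girth $3$. However, there is a genuine gap: everything from ``A case analysis over the admissible distributions\dots rules out every configuration'' onward is an assertion, not an argument, and that case analysis \emph{is} the proof. You yourself flag it as ``the most delicate portion''; the paper spends essentially its entire proof of this theorem executing it, splitting on $|C'|\in\{3,4,5\}$ (the number of $R_n$-edges), computing the images of $e$ under each candidate word in $r_n,\flop_n$, and then arguing positionally --- e.g.\ after $r_k\flop_{k-i+1}r_k$ one still needs at least two further internal reversals to bring the symbol $k$ to the front with its sign within $\pm1$ of $0$ --- to force $|C|\geq 6$.

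The specific worry is that your proposed engine, sign balance per symbol, does not by itself close all cases. It cleanly kills some configurations (a cycle of five $R_n$-edges has an odd net sign shift on every symbol; four $R_n$-edges plus one internal edge cannot balance because a single internal prefix reversal shifts each affected symbol by only $\pm1$). But for three $R_n$-edges plus two internal edges, a per-symbol balance of $0$ is arithmetically conceivable (net $\pm1$ from a $2$--$1$ split of flips and flops, cancelled by placing each symbol in exactly one of the two internal prefixes), so you must additionally argue that the permutation component cannot return to the identity, or that the required copies of $\up{m}{n-1}$ cannot be reached --- which is exactly the routing argument the paper supplies and your sketch omits. Until that analysis is written out (for lengths $4$ and $5$, and separately noting why $m=4,5$ need only the shorter cases), the lower bound $|C|\geq\min\{m,6\}$ is not established.
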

\begin{proof}
Due to the hierarchical structure of $\up{m}{n}$, there are copies of $\up{m}{2}$ embedded in $\up{m}{n}$ and thus the girth of $\up{m}{n}$ is less than or equal to the girth of $\up{m}{2}$, that is $\min\{m,6\}$. We now shall show that there are no shorter cycles present as $n$ increases when $m > 6$. 

We prove this by induction, with $n=2$ from Theorem~\ref{t:n=2girths} being our base case. So let us assume as our inductive hypothesis that the girth of $\up{m}{k-1}$ is $6$ for some $k>3$. Let us assume for the sake of contradiction that the girth of $\up{m}{k}$ is $g \in \{3,4,5\}$, that is $g < 6$, and let $C$ be a cycle of length $g$ in $\up{m}{k}$. Note that $C$ cannot be completely contained in a copy of $\up{m}{k-1}$ embedded in $\up{m}{k}$ because of the inductive hypothesis that the girth of $\up{m}{k-1}$ is $6$. Thus $C$ traverses edges labeled $r_{k}$ of $\flop_{k}$, and let $C'$ be the cycle corresponding to $C$ in $\up{m}{k}/\up{m}{k-1}$ given in the statement of Lemma~\ref{lem:quotient}. 

As mentioned in the paragraph before Lemma~\ref{lem:quotient} we know that the quotient graph $\up{m}{n}/\up{m}{n-1}$ is a complete $n$-partite graph, whose parts are of size $m$, one for each possible sign. It is easy to verify that the girth of this quotient graph is 3. So the cycle $C'$ in $\up{m}{k}/\up{m}{k-1}$ could be of length $3,4,$ or $5$. We shall show that in each case the length of the corresponding cycle $C$ in $\up{m}{k}$ cannot be of length less than 6. Without loss of generality, due to the vertex-transitivity of $\up{m}{k}$, we may consider our cycle $C$ starts at the identity generalized permutation $e=1^02^0 \cdots k^0 \in S(m,k)$.


If $|C'|=3$, then $C'$ connects three different copies of $\up{m}{k-1}$ embedded in $\up{m}{k}$ and $C'$ must be of the form $r_k^3$, $r_k^2\flop_k$, $r_k\flop_k^2$ or $\flop_k^3$. (Since $C'$ is a cycle, forms like $r_k\flop_k r_k$ are included in $r_k^2\flop_k$.)  Clearly $C'\neq C$ since 
\begin{align*}
    (r_k^3)(e) &= k^3(k-1)^3 \cdots 2^31^3, \\  
    (r_k^2\flop_k)(e) &= k^1(k-1)^1 \cdots 2^11^1,\\
    (r_k\flop_k^2)(e) &= k^{m-1}(k-1)^{m-1} \cdots 2^{m-1}1^{m-1}, \text{ and }\\
    (\flop_k^3)(e) &= k^{m-3}(k-1)^{m-3} \cdots 2^{m-3}1^{m-3}    
\end{align*} 
are not the identity. Thus $C$ contains edges within at least one of the three copies of $\up{m}{k-1}$ that $C'$ is incident upon. 

Consider when $C'=r_k^3$ or $C'=r_k^2\flop_k$. The last signed-symbol of the three distinct copies would be $k^0$, $1^1$, and $i^{m_i}$ for some $i \in [2,k-1]$ and $m_i\in[0,m-1]$. We traverse $C'$ from $e=1^02^0 \cdots k^0$ and perform the first $r_k$ in $C'$ to reach the next copy of $\up{m}{k-1}$, which ends in $1^1$,
\begin{equation}\label{e:1r_k} 
(r_k)(e) = k^1 (k-1)^1 \cdots i^1 \cdots 2^11^1. 
\end{equation}

In order to reach the next copy of $\up{m}{k-1}$, which ends in $i^{m_i}$, we can either flip or flop with index $k-i+1$ to put symbol $i$ in the first position. Suppose the reversal done is $\flop_{k-i+1}$ followed by the second $r_k$ of $C'$,
\begin{equation} 
(\flop_{k-i+1}r_k)(e) = i^0 \cdots (k-1)^0k^0(i-1)^1\cdots2^11^1\text{, and}
\end{equation}

\begin{equation}\label{e:2r_k} 
(r_k\flop_{k-i+1}r_k)(e) = 1^2 2^2 \cdots (i-1)^2 k^1 \cdots i^1.\end{equation}

Before completing the last flip/flop of $C'$ to enter our last copy of $\up{m}{k-1}$, which ends in $k^0$, at least two more reversals are required to not only place the symbol $k$ at the beginning of the generalized permutation but also to make its sign within $\pm1$ of $0$. A similar argument would suffice for $C'=\flop_k^3$ or $C'=\flop_k^2r_k$. Thus when $|C'|=3$ the associated cycle $C$ must have length $|C| \geq 6$. The same argument holds if we perform $r_{k-i+1}$ followed by the second $r_k$

If $|C'| = 4$, then $C'$ must be of the form $r_k^4$, $r_k^3\flop_k$, $r_k^2\flop_k^2$, $(r_k\flop_k)^2$, $r_k\flop_k^3$, or $\flop_k^4$. Once again it can be verified that each possible form of $C'$ is not identically $C$ since 
\begin{align*}
    (r_k^4)(e) &= 1^42^4 \cdots (k-1)^4k^4, \\  
    (r_k^3\flop_k)(e) &= 1^22^2 \cdots (k-1)^2k^2,\\
    (r_k\flop_k^3)(e) &= 1^{m-2}2^{m-2} \cdots (k-1)^{m-2}k^{m-2}, \text{ and }\\
    (\flop_k^4)(e) &= 1^{m-4}2^{m-4} \cdots (k-1)^{m-4}k^{m-4}    
\end{align*} 
while $r_k^2\flop_k^2$ and $(r_k\flop_k)^2$ would traverse the same edge forward and backward in $\up{m}{k}$. Thus some additional edges in the copies of $\up{m}{k-1}$ that $C'$ is incident upon are needed so that $C$ is a cycle.

In the cases where $C'$ is of the form $r_k^2\flop_k^2$, $(r_k\flop_k)^2$, $r_k^3\flop_k$, and $r_k\flop_k^3$ at least two edges are needed between a pair of $r_k$ and $\flop_k$ to avoid the traversing of the same edge. There are two pairs of adjacent $r_k$'s and $\flop_k$'s. So any associated cycle $C$ in $\up{m}{k}$ must have $|C| \geq 6$.

Lastly consider the cases where $C'$ is of the form $r_k^4$ and $\flop_k^4$. Say $C'=r_k^4$ and a similar argument suffices for $C'=\flop_k^4$. The last signed-symbol of the four distinct copies met by $C'$ would be $k^0, 1^1, i^{m_i},$ and $j^{m_j}$ for $i\in[2,k]$, $j\in[k-1]\smallsetminus\{i\}$, and $m_i,m_j \in [0,m-1]$. We start traversing from $e$ exiting the copy with last signed-symbol $k^0$ and leading to the copy ending with signed-symbol $1^1$. Within the latter copy a reversal of length $k-i+1$ is needed to put $i$ in the first position. Say the reversal $\flop_{k-i+1}$ is done followed by the second $r_k$ of $C'$ resulting in the same generalized permutation as in (\ref{e:2r_k}). An additional edge with length less than $k$ would be needed to move the symbol $j$ to the first position so that after the third $r_k$ of $C'$ we enter the copy of $\up{m}{k-1}$ ending in $j^{m_j}$. This gives us that $|C| \geq 6$. The same argument holds if we perform $r_{k-i+1}$ followed by the second $r_k$

Finally if $|C'| = 5$, then it can be verified that $C'\neq C$ in $\up{m}{k}$ since any such $5$-cycle must have an odd number of edges labeled either $r_k$ or $\flop_k$, which would result in the sign of all the symbols to be odd, i.e.\ not $0$ since $m>6$. Thus an additional edge is required in at least one copy of $\up{m}{k-1}$. Therefore the associated cycle $C$ would have $|C| \geq 6$.
\end{proof}

Table~\ref{tab:girths} summarizes the girth of the undirected generalized pancake graphs.

\begin{table}[ht!]
\begin{tabular}{|r|r|l|}
\hline
\multicolumn{1}{|l|}{$m$} & \multicolumn{1}{l|}{$n$} & girth of $\upmn$ \\ \hline
    $1$ & $\geq3$ & $6$ \cite[Theorem 5]{Compeau2011}\\ \hline 
    $2$ & $\geq2$ & $8$ \cite[Theorem 10]{Compeau2011}\\ \hline 
    $\geq 3$ & $\geq2$ & $\min\{m,6\}$\\ \hline 
\end{tabular}
\caption{Girths of $\upmn$ for all non-trivial values of $m,n$.}
\label{tab:girths}
\end{table}

\begin{rem}We remark once again that the cycle structure of $\pmn$ is more mysterious. Clearly, the girth of $\pmn$ is at least $m$ due to the action of $C_m$. However, it remains to be understood whether or not shorter cycles exist.  
\end{rem}

In the remainder of the paper, we discuss the pancyclic and panevencyclic properties of $\upmn.$

\section{\texorpdfstring{$\up{3}{n}$}{P3(n)} is pancyclic}\label{sec:up3n}

In this section, we prove the following theorem.

\begin{thm}\label{t:3pancyclic}
The graph $\up{3}{n}$ is pancyclic for $n\geq1$.
\end{thm}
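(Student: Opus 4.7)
The plan is to induct on $n$. The case $n=1$ is immediate, since $\up{3}{1}$ is a triangle. The case $n=2$ is the substantive base case, and I would handle it by explicit construction of cycles of every length $3,4,\ldots,18$ inside $\up{3}{2}$, exploiting the six embedded triangles (the copies of $\up{3}{1}$) together with the base cycle $\C=(r_1\flop_2)^{6}$ of length $12$ given by Lemma~\ref{lem:basecycle}, and stitching these pieces together across the $r_2$- and $\flop_2$-edges for the intermediate lengths.

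For the inductive step with $n\geq 3$, set $N=|V(\up{3}{n-1})|=3^{n-1}(n-1)!$. Cycles of length $\ell$ with $3\leq\ell\leq N$ are supplied by the inductive hypothesis inside any one of the $3n$ embedded copies of $\up{3}{n-1}$ in $\up{3}{n}$. For the remaining lengths $N<\ell\leq 3nN=|V(\up{3}{n})|$, I would use the base cycle $\C=(r_{n-1}\flop_n)^{3n}$ of length $6n$ as a scaffold. Crucially, for $n\geq 3$ one has $6n\leq N$, so the intervals $[3,N]$ and $[6n,3nN]$ together cover all of $[3,3nN]$; this is precisely why the case $n=2$, where $6n=12>3=N$, has to be handled separately.

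The cycle $\C$ visits each of the $3n$ copies of $\up{3}{n-1}$ exactly once, passing through exactly two vertices $u,v$ per copy that are joined by an $r_{n-1}$-edge. I would then \emph{inflate} one or more copies by replacing this single $r_{n-1}$-edge $uv$ with a $u$-to-$v$ path of length $k\in[2,N-1]$ lying entirely inside that copy, thereby increasing the cycle length by $k-1$. Since at most one inflation is available per copy but $3n$ copies can be inflated independently, the total increment $t=\sum_i(k_i-1)$ ranges over every integer in $\{1,2,\ldots,3n(N-2)\}$ (using that consecutive ranges $[j,j(N-2)]$ overlap for $N\geq 4$), producing cycles of every length in $\{6n,6n+1,\ldots,3nN\}$, as required.

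The principal obstacle is supplying, for the prescribed $r_{n-1}$-edge $uv$ in a chosen copy, a $u$-$v$ path of every length $2\leq k\leq N-1$. Pancyclicity of $\up{3}{n-1}$ alone is not quite enough for this: what is really needed is that every $r_{n-1}$-edge of $\up{3}{n-1}$ lies on a cycle of each length $3\leq \ell\leq N$, since then deletion of the edge yields a $u$-$v$ path of the prescribed length. I would secure this by carrying the slightly stronger inductive statement ``every $r_{n-1}$-edge of $\up{3}{n-1}$ lies on a cycle of every length $3\leq\ell\leq N$'' through the induction, leveraging the edge-transitivity of labeled edges (Lemma~\ref{l:edgetrans}) to reduce the verification to a single distinguished $r_{n-1}$-edge, and checking that the inflated cycles produced above themselves traverse such an edge so that the strengthened hypothesis is preserved at each stage.
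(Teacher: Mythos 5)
Your overall architecture is the paper's: induct on $n$, obtain lengths up to $N=3^{n-1}(n-1)!$ inside a single copy of $\up{3}{n-1}$, and reach the remaining lengths by grafting cycles from the copies onto the base cycle $\C$ of Lemma~\ref{lem:basecycle} along shared $r_{n-1}$-edges, using Lemma~\ref{l:edgetrans} to align those edges. The gap is in how you fill the range above $N$. The strengthened hypothesis you propose to carry --- that every $r_{n-1}$-edge of $\up{3}{n-1}$ lies on a cycle of \emph{every} length $3\le\ell\le N$ --- is false at the bottom end: for $n-1\ge2$ the only triangles in $\up{3}{n-1}$ are the embedded copies of $\up{3}{1}$, which use only $r_1$- and $\flop_1$-edges (a short computation with the underlying permutations and signs rules out any triangle using an $r_{n-1}$- or $\flop_{n-1}$-edge). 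Consequently there is no $u$--$v$ path of length $2$ replacing the designated edge, the increment $k-1=1$ is unavailable, and your claim that $t=\sum_i(k_i-1)$ sweeps out all of $\{1,\dots,3n(N-2)\}$ fails. This is fatal precisely at $n=3$, where $6n=N=18$: your scaffold yields length $18$ and then nothing until $20$, while the missing length $19$ exceeds $N$ and so is not supplied by the inductive hypothesis either. A $19$-cycle does exist in $\up{3}{3}$, but your argument does not produce it.

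The paper sidesteps both issues. It never demands short cycles through a prescribed edge: it only merges cycles $C_i$ with $\ell(C_i)>3^{n-2}(n-2)!$, which are forced to contain an $r_{n-1}$-edge simply because a cycle avoiding all such edges is trapped in a copy of $\up{3}{n-2}$; thus plain pancyclicity of $\up{3}{n-1}$ suffices as the inductive hypothesis, with Lemma~\ref{l:edgetrans} aligning that edge with $\C$. The price is that merged lengths only start at $3^{n-2}(n-2)!+6n-1$, so instead of your one-shot count of achievable increments the paper runs a ``length $k$ implies length $k+1$'' step with a two-case analysis (either some $\ell(C_j)$ can be bumped by one, or all are maximal and one trades $C_q$ for two cycles whose lengths are controlled by Observations~\ref{o:bigger n-2} and~\ref{o:+3}). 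And because the inequality $3^{n-2}(n-2)!+6n-1<3^{n-1}(n-1)!$ that closes the seam between the two regimes only holds for $n\ge4$, the paper takes $n\in[3]$, not $n\in[2]$, as its base cases. To repair your write-up along your own lines you would need to (i) weaken the carried hypothesis to lengths $\ell\ge4$ (or to the true minimum length of a cycle through a top-label edge) and verify that it actually propagates to level $n$, and (ii) add $n=3$ as a separately verified base case.
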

If $n>3$, we use the ``base cycle" $\C$ from Lemma~\ref{lem:basecycle} that goes through all the $3m$ copies of $\up{3}{n-1}$ embedded in $\up{3}{n}$.  
Moreover, if we take a cycle in any of the copies of $\up{3}{n-1}$ with more than $3^{n-2}(n-2)!$ edges, then such a cycle must contain an edge labeled by $r_{n-1}$. By Lemma~\ref{l:edgetrans}, we can assume that such an edge overlaps with one of the edges labeled $r_{n-1}$ from $\C$. So we may merge $q$ cycles $C_1,\ldots,C_q$ from $q$ different copies of $\up{3}{n-1}$ embedded in $\up{3}{n}$ and $\ell(C_i)>3^{n-2}(n-2)!$ for $i\in [q]$ with the base cycle $\C$. The length of the resulting cycle is

\begin{equation}\label{c1}
    \sum_{i=1}^q(\ell(C_i)-1)+(6n-q), \text{ with } q\in [3n].
\end{equation}

\begin{obs}\label{o:6n-q}
If $q \in [3n]$ and $3^{n-2}(n-2)!<\ell(C_i)\leq 3^{n-1}(n-1)!$ for every $i \in [q]$, then 
\[
3^{n-2}(n-2)!+6n-1\leq \sum_{i=1}^q(\ell(C_i)-1)+(6n-q)\leq 3^{n}n!.
\] Moreover, the lower bound is obtained by setting $q=1$ with $\ell(C_1)=3^{n-2}(n-2)!+1$ and the upper bound is obtained by setting $q=3n$ with $\ell(C_1)=\cdots=\ell(C_{3n})=3^{n-1}(n-1)!$. 
\end{obs}

In the proof of Theorem~\ref{t:3pancyclic}, we will assume as inductive hypothesis that all the lengths from $3^{n-2}(n-2)!+1$ to $3^{n-1}(n-1)$ are possible for each cycle $C_1,\ldots, C_q$.

So it is sufficient to prove that as we vary the length of the cycles in Form (\ref{c1}), we obtain the cycles of every length from $3^{n-2}(n-2)!+6n-1$ to $3^nn!$, which we know are possible by Observation~\ref{o:6n-q}. Furthermore, we know through Observation~\ref{o:overlap} that cycles in Form (\ref{c1}) have lengths overlapping those lengths implied by the inductive hypothesis. The following observations are easily verified.
\begin{obs}\label{o:overlap}
If $n\geq4$, then $3^{n-2}(n-2)!+6n-1<3^{n-1}(n-1)!$.
\end{obs}
\begin{obs}\label{o:bigger n-2}
If $n\geq2$, then $3^{n-1}(n-1)!-3^{n-2}(n-2)!>3^{n-2}(n-2)!$.
\end{obs}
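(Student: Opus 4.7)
The statement to prove is the arithmetic inequality $3^{n-1}(n-1)! - 3^{n-2}(n-2)! > 3^{n-2}(n-2)!$ for all $n\geq 2$. My plan is a direct algebraic manipulation, rewriting the inequality so that a common factor of $3^{n-2}(n-2)!$ can be cancelled from both sides.

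First I would factor $3^{n-1}(n-1)!$ as $3\cdot 3^{n-2}\cdot (n-1)\cdot (n-2)! = 3(n-1)\cdot 3^{n-2}(n-2)!$. Substituting this into the left-hand side, the inequality becomes
\[
\bigl(3(n-1)-1\bigr)\cdot 3^{n-2}(n-2)! > 3^{n-2}(n-2)!.
\]
Since $3^{n-2}(n-2)!>0$ for every $n\geq 2$, I can divide through by it, reducing the claim to the elementary inequality $3(n-1)-1 > 1$, equivalently $3n > 5$, i.e.\ $n > 5/3$.

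The hypothesis $n\geq 2$ immediately implies $n > 5/3$, completing the proof. There is no genuine obstacle: the whole argument is a one-line calculation, and the only thing worth being mildly careful about is checking that the base case $n=2$ is included (indeed $3(2-1)-1 = 2 > 1$). I would write the proof as a short two-line display followed by the observation that the remaining inequality holds whenever $n\geq 2$.
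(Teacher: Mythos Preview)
Your argument is correct; the paper does not actually supply a proof of this observation (it is stated among a list of facts that ``are easily verified''), and your one-line algebraic reduction to $3(n-1)-1>1$ is exactly the kind of verification the authors have in mind.
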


\begin{obs}\label{o:+3}
If $n\geq 3$, then $3^{n-2}(n-2)!+3<3^{n-1}(n-1)!$.
\end{obs}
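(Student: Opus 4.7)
The plan is to reduce Observation~\ref{o:+3} to a transparent arithmetic comparison by moving the additive $3^{n-2}(n-2)!$ term to the right-hand side and factoring out the common powers of $3$ and factorials. Specifically, I would rewrite the desired inequality $3^{n-2}(n-2)! + 3 < 3^{n-1}(n-1)!$ in the equivalent form $3 < 3^{n-1}(n-1)! - 3^{n-2}(n-2)!$, and then use $(n-1)! = (n-1)(n-2)!$ together with $3^{n-1} = 3\cdot 3^{n-2}$ to pull out $3^{n-2}(n-2)!$ on the right. This yields
\[
3^{n-1}(n-1)! - 3^{n-2}(n-2)! \;=\; 3^{n-2}(n-2)!\bigl(3(n-1) - 1\bigr) \;=\; 3^{n-2}(n-2)!(3n-4).
\]

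From this point the verification is immediate: for $n\geq 3$ each of the three factors $3^{n-2}$, $(n-2)!$, and $3n-4$ is at least $3$, $1$, and $5$ respectively, so their product is at least $15$, which is strictly greater than $3$. An alternative route would be a short induction on $n$ with base case $n=3$, where one directly computes $3^{1}\cdot 1! + 3 = 6 < 18 = 3^{2}\cdot 2!$ and then observes that multiplying $n$ by $1$ and replacing $n-1$ by $n$ multiplies the right-hand side by $3n$ while adding only a bounded amount to the left-hand side; however, the direct factoring above avoids induction entirely and seems cleaner.

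There is essentially no real obstacle in this observation. The only point requiring any care is the index bookkeeping in the factoring step (making sure the exponents of $3$ and the factorials line up), and once that is done the monotonicity of the factors in $n$ for $n\geq 3$ finishes the proof in a single line.
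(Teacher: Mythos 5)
Your argument is correct: the factoring $3^{n-1}(n-1)! - 3^{n-2}(n-2)! = 3^{n-2}(n-2)!(3n-4)$ is right, and the bound $3^{n-2}(n-2)!(3n-4) \geq 15 > 3$ for $n \geq 3$ finishes it. The paper states this observation without proof (``easily verified''), so your direct verification simply supplies the omitted routine detail.
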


We now can proceed to the main result of this section.

\begin{proof}[Proof of Theorem~\ref{t:3pancyclic}]
The proof is by induction on $n$. For the base cases of $ n\in [3]$, the claim is verified by hand or by computer search. So we will assume in the inductive hypothesis that $n\geq4$ and that $\up{3}{n-1}$ is pancyclic.

We now show that there are cycles of any length $\ell \in [3,3^{n}n!]$ in $\up{3}{n}$. By the inductive hypothesis we know there already exist cycles of length $\ell \in [3,3^{n-1}(n-1)!]$ within a copy of $\up{3}{n-1}$ with fixed signed-symbol. Thus it remains to verify that there are cycles of length $\ell \in [3^{n-1}(n-1)!,3^{n}n!]$ within $\up{3}{n}$.

These cycles are constructed by merging upon the base cycle, $\C$, cycles of varying lengths within the incident copies of $\up{3}{n-1}$. The length of such a cycle will be in the form (\ref{c1}). Notice, however, that the minimum cycle length obtained from (\ref{c1}) is $\min(\ell(\C))=3^{n-2}(n-2)!+6n-1$, which is greater than $3^{n-2}(n-2)!$. We provide a procedure to produce such a cycle of length $k+1$, if there exists such a cycle of length $k < 3^{n}n!$. 

Let us suppose $k= \sum_{i=1}^q(\ell(C_i)-1)+(6n-q)$ for some $q\in [3n]$ and $\ell(C_i)>3^{n-2}(n-2)!$ for $i \in [q]$. Moreover, notice that since $k<3^{n}n!$, it follows that $q<3n$. Now there are two cases to consider.
\begin{description}
\item[$\ell(C_j)<3^{n-1}(n-1)!$ for some $j$] In this case, we can replace $C_j$ with a cycle $C'_j$ of length $\ell(C_j)+1$, which exists since we are assuming that $\up{3}{n-1}$ is pancyclic. Thus by merging $C_1,\ldots,C'_j,\ldots,C_q$ with $\C$, we obtain a cycle of length $k+1$. 

\item[$\ell(C_i)=3^{n-1}(n-1)!$ for all $1 \leq i < q$] In this case, we can replace $C_q$ with a cycle $C'_q$ of length $3^{n-1}(n-1)!-3^{n-2}(n-2)!$, which is greater than $3^{n-2}(n-2)!$ by Observation~\ref{o:bigger n-2}, and in a copy of $\up{3}{n-1}$ different from the $q<3n$ copies where the cycles $C_1,\ldots,C_q$ are embedded, use a cycle $C_{q+1}$ of length $3^{n-2}(n-2)!+3$, which we know is less than $3^{n-1}(n-1)!$ by Observation~\ref{o:+3}. Such a $C_{q+1}$ exists due to the assumption that $\up{3}{n-1}$ is pancyclic. Hence, after merging $\C$ with $C_1,\ldots,C_{q-1},C'_q,C_{q+1}$ along the edges they have in common, the resulting cycle has length 
\begin{align*}
    &\sum_{i=1}^{q-1}(\ell(C_i)-1)+(\ell(C'_q)-1)+(\ell(C_{q+1})-1)+(6n-(q+1))\\
    &=\sum_{i=1}^{q-1}(\ell(C_i)-1)+\left(3^{n-1}(n-1)!\right)+(6n-q)\\
    &=\sum_{i=1}^{q-1}(\ell(C_i)-1)+(3^{n-1}(n-1)!-1)+(6n-q+1)\\
    &=\sum_{i=1}^q(\ell(C_i)-1)+(6n-q)+1\\
    &=k+1,
\end{align*} as desired.
\end{description}

Thus, there are cycles of any length $\ell \in [3,3^{n}n!]$ within $\up{3}{n}$.  Therefore, $\up{3}{n}$ is pancyclic. 
\end{proof}

We remark that when considering only forward flips, that is, in the directed graph $\p{3}{n}$, the digraph is not pancyclic. In particular, there are no cycles of length 5, 7, 11, and 17 (see Table~\ref{tab:directed-cycles}.) In the next section we show that $\up{4}{n}$ has all cycles of even length, i.e. $\up{4}{n}$ is panevencyclic. 

\section{\texorpdfstring{$\up{4}{n}$}{P4(n)} is panevencyclic}\label{sec:up4n}

In this section we prove our second main result, that $\up{4}{n}$ contains all cycles of length $4, 6, 8,\ldots, 4^nn!$ if $n\geq1$. Furthermore, we point out that unlike $\p{3}{n}$, we are able to understand the structure of the directed $\p{4}{n}$. 

The proof is similar to that of Theorem \ref{t:3pancyclic} in that it is done by induction on $n$. We will employ the base cycle $\C$, again, and merge it with cycles from the different copies of $\up{4}{n-1}$ with $n\geq3$ that are embedded within $\up{4}{n}$.

Given the hierarchical structure of $\up{4}{n}$, there are $4n$ copies of $\up{4}{n-1}$, and we shall embed cycles $C_i$, with $i \in [q]$, from $q$ distinct copies, with $q\in [4n]$, and merge them with the base cycle $\C$. The length of the resulting cycle is 
\begin{equation}\label{e:4n}
\sum_{i=1}^q(\ell(C_i)-1)+(8n-q),
\end{equation}
where $8n$ is the length of $\C$. Furthermore, we shall ensure that the cycles $C_i$ have $\ell(C_i)>4^{n-2}(n-2)!$ so that each $C_i$ will contain an edge labeled $r_{n-1}$, and therefore by edge-transitivity (Lemma~\ref{l:edgetrans}), we pick an edge in $C_i$ that overlaps with $\C$.

The following observations are in order. 

\begin{obs}\label{obs:4n}
If $q\in [4n]$ and $4^{n-2}(n-2)!<\ell(C_i)\leq 4^{n-1}(n-1)!$ for $i\in [q]$, then
\[4^{n-2}(n-2)!+8n-1\leq \sum_{i=1}^q(\ell(C_i)-1)+(8n-q)\leq 4^nn!\]

The lower bound is obtained by setting $q=1$ and $\ell(C_1)=4^{n-2}(n-2)!+1$ and the upper bound is obtained by setting $q=4n$ and $\ell(C_1)=\ldots=\ell(C_{4n})=4^{n-1}(n-1)!.$
\end{obs}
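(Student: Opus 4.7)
My plan is to treat the expression $F(q; \ell(C_1),\dots,\ell(C_q)) := \sum_{i=1}^q(\ell(C_i)-1) + (8n-q)$ as a separable optimization problem: for each fixed $q$, first optimize over the admissible integer values $\ell(C_i) \in (4^{n-2}(n-2)!,\,4^{n-1}(n-1)!]$, and then optimize the resulting univariate expression over $q \in [4n]$.

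For the lower bound, $F$ is strictly increasing in each $\ell(C_i)$, so for fixed $q$ the minimum is attained by setting every $\ell(C_i)$ equal to the smallest admissible integer, namely $4^{n-2}(n-2)! + 1$. This reduces $F$ to $q \cdot (4^{n-2}(n-2)! - 1) + 8n$, which is increasing in $q$ whenever $4^{n-2}(n-2)! > 1$; this holds throughout the regime $n \geq 3$ in which Section~\ref{sec:up4n} operates. The minimum therefore occurs at $q=1$ and equals $4^{n-2}(n-2)! + 8n - 1$, matching the claim. For the upper bound, the same monotonicity in each coordinate forces $\ell(C_i) = 4^{n-1}(n-1)!$ for all $i$, after which $F = q \cdot (4^{n-1}(n-1)! - 2) + 8n$ is strictly increasing in $q$ (since $4^{n-1}(n-1)! \geq 4$ in this range), so $q = 4n$ is optimal. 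A direct simplification gives $4n \cdot 4^{n-1}(n-1)! - 4n + 4n = 4^n\, n!$, as advertised.

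There is no genuine obstacle here: the entire argument is a separation-of-variables exercise together with one elementary monotonicity check in $q$, which itself rests on the inequality $4^{n-2}(n-2)! > 1$ for $n \geq 3$. I note that the same reasoning, essentially verbatim with $4$ replaced by $3$ and $8n$ replaced by $6n$, would justify the analogous unlabeled observation stated earlier in Section~\ref{sec:up3n}; and it also explains why the lower and upper endpoints of the cycle-length range arise from genuinely extreme configurations (a single minimal merged cycle, versus a maximally populated base cycle), which is precisely what is needed to set up the inductive construction of cycles of all intermediate even lengths in the main panevencyclicity theorem.
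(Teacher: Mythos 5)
Your proof is correct and takes essentially the same route the paper intends: the paper states this Observation without proof, merely naming the two extremal configurations, and your separable optimization (coordinate-wise monotonicity in each $\ell(C_i)$, then monotonicity of the reduced expression in $q$, valid since $4^{n-2}(n-2)!>1$ for $n\geq 3$) is exactly the verification being implicitly invoked.
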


\begin{obs}\label{obs:min4n}
If $n\geq3$, then $4^{n-2}(n-2)!+8n-1<4^{n-1}(n-1)!$.
\end{obs}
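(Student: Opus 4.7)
The plan is purely arithmetic: factor $4^{n-2}(n-2)!$ out of the right-hand side. Since $4^{n-1}(n-1)! = 4(n-1)\cdot 4^{n-2}(n-2)!$, the inequality to prove is equivalent to
\[
8n - 1 < \bigl(4(n-1) - 1\bigr)\cdot 4^{n-2}(n-2)! \;=\; (4n-5)\cdot 4^{n-2}(n-2)!.
\]
First I would verify the base case $n = 3$ directly: the left side equals $23$, while the right side equals $7\cdot 4 \cdot 1 = 28$, so the inequality holds with room to spare.

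For $n \geq 4$ the right-hand side dwarfs the left, and a short induction finishes the job. Passing from $n$ to $n+1$, the left side increases by $8$, whereas the right side is multiplied by
\[
\frac{(4n-1)\cdot 4(n-1)}{4n-5},
\]
which is at least $88/7 > 12$ when $n\geq 3$. Since the right side is already $\geq 28$ at $n = 3$, this multiplicative growth comfortably absorbs the additive growth on the left, so the inequality is preserved under the induction step.

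There is essentially no obstacle here, as this is just a comparison between a linear quantity and a super-exponential one. The only role of the observation in the sequel is to certify, in the panevencyclic analogue of Theorem~\ref{t:3pancyclic}, that the minimum cycle length producible by merging the base cycle $\C$ with nontrivial cycles in the embedded copies of $\up{4}{n-1}$ still lies below the maximum cycle length $4^{n-1}(n-1)!$ available in each such copy by the inductive hypothesis.
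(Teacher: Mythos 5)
Your proof is correct. The paper states this observation without proof as an easily verified arithmetic fact, and your factorization of $4^{n-2}(n-2)!$ followed by the base case $n=3$ and a growth-rate comparison is exactly the routine verification intended; your closing remark about the observation's role (guaranteeing no gap between the cycle lengths available inside a single copy of $\up{4}{n-1}$ and the minimum length obtainable by merging with the base cycle) also matches how it is used in the proof of Theorem~\ref{t:up4n}.
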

\begin{obs}\label{o4:bigger n-2}
If $n\geq2$, then $4^{n-1}(n-1)!-4^{n-2}(n-2)!>4^{n-2}(n-2)!$.
\end{obs}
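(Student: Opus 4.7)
The plan is to reduce the claim to a one-line arithmetic inequality by factoring out the largest common factor from both sides. Concretely, I would rewrite the inequality $4^{n-1}(n-1)! - 4^{n-2}(n-2)! > 4^{n-2}(n-2)!$ as
\[
4^{n-1}(n-1)! > 2 \cdot 4^{n-2}(n-2)!,
\]
and then divide both sides by the positive quantity $4^{n-2}(n-2)!$. Since $4^{n-1}(n-1)!/\bigl(4^{n-2}(n-2)!\bigr) = 4(n-1)$, the inequality becomes $4(n-1) > 2$, i.e.\ $n > 3/2$.

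Because $n$ is an integer with $n \geq 2$, this holds. In fact, one may write the difference in closed form as
\[
4^{n-1}(n-1)! - 4^{n-2}(n-2)! = 4^{n-2}(n-2)!\bigl(4(n-1)-1\bigr) = (4n-5)\,4^{n-2}(n-2)!,
\]
and $4n-5 \geq 3 > 1$ whenever $n \geq 2$, which yields the claim.

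There is no real obstacle here; the observation is a purely numerical lemma. I expect it will play the same role in the inductive step toward $\up{4}{n}$ being panevencyclic as Observation~\ref{o:bigger n-2} plays in the proof of Theorem~\ref{t:3pancyclic}: namely, if one of the merged sub-cycles $C_q$ already has the maximal length $4^{n-1}(n-1)!$ and cannot be enlarged, it must instead be replaced by a cycle $C'_q$ of length $4^{n-1}(n-1)! - 4^{n-2}(n-2)!$, together with a second cycle in a fresh copy of $\up{4}{n-1}$; the present observation guarantees that this $C'_q$ still lies above the threshold $4^{n-2}(n-2)!$, so that it is forced to contain an edge labeled $r_{n-1}$ and can be merged with the base cycle $\C$ via Lemma~\ref{l:edgetrans}.
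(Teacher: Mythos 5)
Your verification is correct: dividing through by $4^{n-2}(n-2)!$ reduces the claim to $4(n-1)-1>1$, which holds for all integers $n\geq 2$, and this elementary computation is precisely the routine check the paper intends (it states the observation without proof as "easily verified"). Your reading of the observation's role in the second case of the inductive step of Theorem~\ref{t:up4n} — guaranteeing that the replacement cycle $C'_q$ stays above the threshold $4^{n-2}(n-2)!$ so that it contains an $r_{n-1}$-labeled edge and can be merged with $\C$ — also matches the paper exactly.
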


\begin{obs}\label{o4:+3}
If $n\geq3$, then  $4^{n-2}(n-2)!+4<4^{n-1}(n-1)!$.
\end{obs}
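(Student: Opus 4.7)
The plan is to verify this as a routine arithmetic inequality. First I would subtract $4^{n-2}(n-2)!$ from both sides so that the claim becomes $4 < 4^{n-1}(n-1)! - 4^{n-2}(n-2)!$, and then factor the common factor out of the right-hand side.

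Using $(n-1)! = (n-1)(n-2)!$ and $4^{n-1} = 4 \cdot 4^{n-2}$, one obtains
$$4^{n-1}(n-1)! - 4^{n-2}(n-2)! \;=\; 4^{n-2}(n-2)!\bigl(4(n-1) - 1\bigr) \;=\; 4^{n-2}(n-2)!\,(4n-5).$$
For $n = 3$ this evaluates to $4 \cdot 1 \cdot 7 = 28 > 4$, and each of the three factors $4^{n-2}$, $(n-2)!$, and $4n-5$ is nondecreasing in $n$ for $n \geq 3$. Hence the right-hand side only grows as $n$ increases, so the inequality holds throughout the claimed range.

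There is no substantive obstacle here; the statement is a simple numerical gap estimate. What is worth noting is its intended role downstream: in the forthcoming induction showing $\up{4}{n}$ is panevencyclic, one must sometimes insert a fresh auxiliary cycle of length $4^{n-2}(n-2)!+4$ into a previously untouched copy of $\up{4}{n-1}$ in order to bump the total cycle length up by one step. Observation~\ref{o4:+3} simply certifies that such an auxiliary cycle is short enough to exist inside a single copy of $\up{4}{n-1}$ (which has $4^{n-1}(n-1)!$ vertices), so the construction never overflows. Together with Observations~\ref{obs:min4n} and~\ref{o4:bigger n-2}, it packages the three comparisons needed to make the length accounting in (\ref{e:4n}) land on every even value in the target range.
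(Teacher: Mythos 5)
Your verification is correct: the paper states this observation without proof (as ``easily verified''), and your factorization $4^{n-1}(n-1)!-4^{n-2}(n-2)!=4^{n-2}(n-2)!(4n-5)\geq 28$ for $n\geq3$ is exactly the routine check intended. The surrounding remarks on the observation's role in the induction are accurate as well (noting only that the ``one step'' there is an increment of $2$ in cycle length, since the lengths are even).
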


With these established results we can state and prove the following result.

\begin{thm}\label{t:up4n} The graph $\up{4}{n}$ is panevencyclic for $n\geq1$.
\end{thm}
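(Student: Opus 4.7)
The plan is to mirror the inductive argument of Theorem~\ref{t:3pancyclic}, modified so that realized cycle lengths increase in steps of $2$ rather than $1$. I would induct on $n$, with $n\in\{1,2,3\}$ handled as base cases by direct inspection (the $n=2$ case being the content of the preceding subsection of Section~\ref{sec:up4n}, and $n=1,3$ reducing to small finite checks). For the inductive step with $n\geq 4$, the hypothesis that $\up{4}{n-1}$ is panevencyclic provides every even length in $[4,\,4^{n-1}(n-1)!]$ inside any fixed copy of $\up{4}{n-1}$ embedded in $\up{4}{n}$, so those targets are inherited automatically; only the even targets in $[4^{n-1}(n-1)!,\,4^{n}n!]$ demand new construction.

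To build cycles in the upper range, I would merge the base cycle $\C$ from Lemma~\ref{lem:basecycle} (of length $8n$) with cycles $C_1,\dots,C_q$ drawn from $q$ distinct copies of $\up{4}{n-1}$, each of length $\ell(C_i)>4^{n-2}(n-2)!$, where $q\in[4n]$. The size bound forces each $C_i$ to use an edge labelled $r_{n-1}$ (otherwise $C_i$ would lie in a single copy of $\up{4}{n-2}$, which contains only $4^{n-2}(n-2)!$ vertices), and Lemma~\ref{l:edgetrans} lets me translate $C_i$ so that this edge coincides with one of the $r_{n-1}$-edges of $\C$. The merged length is given by Equation~(\ref{e:4n}); Observation~\ref{obs:4n} confines it to $[4^{n-2}(n-2)!+8n-1,\,4^{n}n!]$, which together with Observation~\ref{obs:min4n} and the inductive coverage leaves no even-length gap. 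Parity is automatic: since each $\ell(C_i)$ is even by induction, the merged length rewrites as $8n-2q+\sum_{i=1}^{q}\ell(C_i)$, which is even.

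The core of the argument is a $+2$ increment lemma: given a merged cycle of even length $k<4^{n}n!$, produce one of length $k+2$. If some $\ell(C_j)<4^{n-1}(n-1)!$, I replace $C_j$ by an induction-supplied cycle of length $\ell(C_j)+2$ from the same copy. Otherwise $\ell(C_i)=4^{n-1}(n-1)!$ for every $i\in[q]$, which forces $q<4n$ since $k<4^{n}n!$; then I replace $C_q$ by $C'_q$ of length $4^{n-1}(n-1)!-4^{n-2}(n-2)!$ (admissible by Observation~\ref{o4:bigger n-2}) and add a fresh $C_{q+1}$ of length $4^{n-2}(n-2)!+4$ (admissible by Observation~\ref{o4:+3}) in a previously unused copy of $\up{4}{n-1}$. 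Noting that $q$ increases by $1$, so the $(8n-q)$ term in Equation~(\ref{e:4n}) decreases by $1$, the net change in the merged length is
\[
\bigl(\ell(C'_q)-1\bigr)+\bigl(\ell(C_{q+1})-1\bigr)-\bigl(\ell(C_q)-1\bigr)-1 \;=\; 2,
\]
exactly the desired step.

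The main obstacle I anticipate is combinatorial bookkeeping rather than a conceptual hurdle: verifying that every even target in the upper range is reached by iterating the $+2$ step, and that the overlap edges used to attach $C'_q$ and $C_{q+1}$ to $\C$ are not already claimed by some $C_i$. Since distinct copies of $\up{4}{n-1}$ contribute disjoint $r_{n-1}$-edges of $\C$, and Lemma~\ref{l:edgetrans} permits independent choice of the overlap edge inside each copy, these conflicts are avoidable. The parity argument combined with the two increment cases then closes the induction.
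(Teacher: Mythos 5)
Your proposal is correct and follows essentially the same route as the paper's proof: induction on $n$ with the $n=1,2$ cases checked directly, merging cycles from copies of $\up{4}{n-1}$ onto the base cycle $\C$, and the identical two-case $+2$ increment (replace some non-maximal $C_j$, or else swap $C_q$ for one of length $4^{n-1}(n-1)!-4^{n-2}(n-2)!$ and adjoin a fresh $C_{q+1}$ of length $4^{n-2}(n-2)!+4$). Your explicit parity check of the merged length and the remark about disjoint overlap edges are details the paper leaves implicit, but the argument is the same.
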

\begin{proof}
The proof is by induction on $n$. If $n=1$, $\p{4}{n}$ is a 4-cycle. If $n=2$, we used a computer search and found the existence of all even cycles with lengths in $4,6,\ldots,32$. (see Table~\ref{tab:undirected-cycles}.) Thus we will assume in the inductive hypothesis that $n\geq3$ and that $\up{4}{n-1}$ has all even cycles from $4,\ldots, 4^{n-1}(n-1)!$. 

We now show that there are even cycles of length $\ell \in \{4,6, \ldots, 4^{n}n! \}$. By the inductive hypothesis we know that there exist cycles of length $\ell \in \{4,6, \ldots, 4^{n-1}(n-1)!\}$ within a single copy of $\up{4}{n-1}$ embedded in $\up{4}{n}$. Thus it remains to verify that there are even cycles of length $\ell \in \{4^{n-1}(n-1)!+2, \ldots, 4^{n}n!\}$ within $\up{4}{n}$.

These cycles are constructed by merging upon the base cycle, $\C$, cycles of particular length within the incident copies of $\up{4}{n-1}$. By Observation~\ref{obs:4n}~\&~\ref{obs:min4n}, the length of a cycle formed by merging cycles on the base cycle, in the form (\ref{e:4n}), overlaps with the lengths assumed within a single copy of $\up{4}{n-1}$. Thus, there is no gap in lengths between those available by the inductive hypothesis and those that can be found by merging with the base cycle. Therefore, since the induction hypothesis gives that each copy of $\up{4}{n-1}$ has cycles of all even lengths from 4 to $4^{n-1}(n-1)!$, we need only show that all remaining even lengths may be found by merging cycles to the base cycle. We shall show that if a cycle of length $k$ can be found by this process, that a cycle of length $k+2$ may be found, up until $k+2 = 4^{n}n!$.

Let us assume that $k= \sum_{i=1}^q(\ell(C_i)-1)+(8n-q)$ for some $q\in [4n]$ and $\ell(C_i)>4^{n-2}(n-2)!$ for $i\in [q]$. Moreover, notice that since $k<4^{n}n!$, it follows that $q<4n$. Now there are two cases to consider.
\begin{description}
\item[$\ell(C_j)<4^{n-1}(n-1)!$ for some $j$] In this case, we can replace $C_j$ with a cycle $C'_j$ of length $\ell(C_j)+2$, which exists by assumption. Thus, by merging $C_1,\ldots,C'_j,\ldots,C_q$ with $\C$, we obtain a cycle of length $k+2$. 
\item[$\ell(C_i)=4^{n-1}(n-1)!$ for all $1\leq i< q$]. In this case, we can replace $C_q$ with a cycle $C'_q$ of length $4^{n-1}(n-1)!-4^{n-2}(n-2)!$, which is greater than $4^{n-2}(n-2)!$ by Observation~\ref{o4:bigger n-2} and in a copy of $\up{4}{n-1}$ different from the $q<4n$ copies where the $C_1,\ldots,C_q$ are embedded, use a cycle $C_{q+1}$ of length $4^{n-2}(n-2)!+4$, which is less than $4^{n-1}(n-1)!$ by Observation~\ref{o4:+3}. Notice that $C_{q+1}$ exists since we are assuming that $\up{4}{n-1}$ contains any even cycle length in from $4$ to $4^{n-1}(n-1)!$. Hence, after merging $\C$ with $C_1,\ldots,C'_q,C_{q+1}$ along the edges they have in common, the resulting cycle has length 
\begin{align*}
    &\sum_{i=1}^{q-1}(\ell(C_i)-1)+(\ell(C'_q)-1)+(\ell(C_{q+1})-1)+(8n-(q+1))\\
    &=\sum_{i=1}^{q-1}(\ell(C_i)-1)+4^{n-1}(n-1)!+(8n-q)+1\\
    &=\sum_{i=1}^{q-1}\ell(C_i)+(4^{n-1}(n-1)!-1)+(8n-q+2)\\
    &=\sum_{i=1}^q(\ell(C_i)-1)+(8n-q)+2\\
    &=k+2,
\end{align*}
\end{description} 
as desired.
\end{proof}

In the next section, we generalize Theorem~\ref{t:3pancyclic} and Theorem~\ref{t:up4n}. The proof is by double induction, with these theorems serving as base case.

\section{\texorpdfstring{$\up{m}{n}$}{Pm(n)} is \texorpdfstring{$m'$}{m'}-pancyclic or \texorpdfstring{$m'$}{m'}-panevencyclic, \texorpdfstring{$m' = \min\{m,6\}$}{m'=min(m,6)} }\label{sec:generalcase}

In this section we prove the general case of Theorem~\ref{t:3pancyclic} and Theorem~\ref{t:up4n}, depending on the parity of $m$. Throughout we assume that $m \geq 3$ and $n \geq 1$. Note, in the case when $n=1$ the graph $\up{m}{1}$ is an $m$-cycle and trivially $m$-pancyclic or $m$-panevencyclic. Furthermore, we shall use the notation $m'$ for $\min\{m,6\}$, the girth of $\up{m}{n}$. The computer searches, for small $m$ and $n$, on $\upmn$ indicated an increase in girth, tracking with the number of signs $m$, but still having embedded cycles with lengths in $[m, m^nn!]$, for odd $m$, or lengths in $\{2k \mid k \in [m/2,m^nn!/2] \}$, for even $m$. With those verified lower values, $(m,n)=(3,2)$ and $(m,n)=(4,2)$, as base cases and Theorems \ref{t:3pancyclic} and \ref{t:up4n} as one direction of a double induction proof, for each parity of $m$, we shall complete the other remaining directions herein. A double induction proof can be done to prove a claim $C(m,n)$ for $m \geq a$ and $n \geq b$, for integers $a$ and $b$ by verifying  
    \begin{enumerate}
        \item $C(a,b)$ is true,
        \item by induction that $C(a,n)$ is true for all $n \geq b$, and
        \item by induction that $C(m,n)$ is true for fixed $n$ and all $m \geq a$, 
    \end{enumerate}
see for reference~\cite[Chapter 3]{Gunderson}.   
That is, we shall show that assuming $\upmn$ is either $m$-pancyclic or $m$-panevencyclic for a fixed $n$, then so is $\up{m+2}{n}$ is $(m+2)$-pancyclic or $(m+2)$-panevencyclic when $m$ is odd or even, respectively. Prior to that, though, we point out some prescient observations.

We will once again employ the same strategy of merging the ``base cycle'' $\C$ with cycles $C_1, \ldots, C_q$ in $q\in[mn]$ distinct copies of $\up{m}{n-1}$ within $\up{m}{n}$. The length of the resulting cycle is

\begin{equation}\label{c1mn}
    \sum_{i=1}^q(\ell(C_i)-1)+(2mn-q), \text{ with } q\in [mn].
\end{equation}

\begin{obs}\label{o:(m,n)mergecycles}
If $q\in[mn]$ and $m^{n-2}(n-2)! < \ell(C_i) \leq m^{n-1}(n-1)!$ for every $i\in[q]$, then 
\[
    m^{n-2}(n-2)!+2mn-1 \leq \sum_{i=1}^q(\ell(C_i)-1)+(2mn-q)\leq m^{n}n!.
\] Moreover, the lower bound is obtained by setting $q=1$ with $\ell(C_1)=m^{n-2}(n-2)!+1$ and the upper bound is obtained by setting $q=mn$ with $\ell(C_1)=\cdots=\ell(C_{mn})=m^{n-1}(n-1)!$. 
\end{obs}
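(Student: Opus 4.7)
The plan is to prove the observation by analyzing the linear expression
\[
f(q;\ell_1,\ldots,\ell_q) := \sum_{i=1}^q (\ell_i-1) + (2mn - q)
\]
as a function of $q\in [mn]$ and of the cycle lengths $\ell_i = \ell(C_i)$, subject to $m^{n-2}(n-2)! < \ell_i \le m^{n-1}(n-1)!$ for each $i$. Since $f$ is linear in each variable and the feasible region is a box, both extrema are attained at corner points, so the two inequalities can be handled independently by pushing every $\ell_i$ to the appropriate extreme and then checking monotonicity in $q$.

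For the lower bound, I would use that each $\ell_i$ is a positive integer strictly greater than $m^{n-2}(n-2)!$, so $\ell_i - 1 \ge m^{n-2}(n-2)!$. Substituting gives
\[
f \ge q\bigl(m^{n-2}(n-2)! - 1\bigr) + 2mn.
\]
Because $m\ge 3$ and $n\ge 2$ yield $m^{n-2}(n-2)! \ge 1$, the coefficient of $q$ is non-negative, so the minimum over $q\ge 1$ occurs at $q=1$. Setting $\ell_1 = m^{n-2}(n-2)! + 1$ saturates the bound and gives the claimed value $m^{n-2}(n-2)! + 2mn - 1$.

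For the upper bound, pushing each $\ell_i - 1$ up to $m^{n-1}(n-1)! - 1$ yields
\[
f \le q\bigl(m^{n-1}(n-1)! - 2\bigr) + 2mn.
\]
The coefficient of $q$ is strictly positive for $m\ge 3$, $n\ge 2$ (since then $m^{n-1}(n-1)!\ge 3$), so the maximum over $q\le mn$ is attained at $q=mn$. Plugging in gives $mn\cdot m^{n-1}(n-1)! - 2mn + 2mn = m^n n!$, realized when $\ell_1 = \cdots = \ell_{mn} = m^{n-1}(n-1)!$.

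There is no substantive obstacle: the argument is simply a sign check on the coefficient of $q$ in each of the two linear forms after fixing the $\ell_i$ at their extremes, and both signs are immediate from $m\ge 3$, $n\ge 2$. The observation is thus a routine bookkeeping identity, and its role in Section~\ref{sec:generalcase} will be to certify that the range of cycle lengths obtainable by merging cycles into the base cycle $\C$ via (\ref{c1mn}) overlaps the inductive range $[m',\, m^{n-1}(n-1)!]$ available within a single copy of $\up{m}{n-1}$, so that the induction can close with no gap.
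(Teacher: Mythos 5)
Your argument is correct, and it coincides with the routine verification the paper leaves implicit: the paper states this Observation without proof, treating exactly the corner-point/monotonicity check you carry out as self-evident. The only detail worth keeping explicit is your use of integrality to pass from $\ell(C_i) > m^{n-2}(n-2)!$ to $\ell(C_i)-1 \geq m^{n-2}(n-2)!$, which is what makes the stated lower bound exactly attainable at $q=1$.
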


\begin{obs}\label{obs:minmn}
If $m\geq3$ and $n\geq3$, then $m^{n-2}(n-2)!+2mn-1<m^{n-1}(n-1)!$.
\end{obs}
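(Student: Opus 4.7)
The plan is to reformulate the inequality in a way that makes the dominant growth explicit. Subtracting $m^{n-2}(n-2)!$ from both sides of the target inequality and factoring $(n-1)!=(n-1)(n-2)!$ out of the difference on the right, the statement is equivalent to
\[
m^{n-2}(n-2)!\,\bigl(m(n-1)-1\bigr) \;>\; 2mn-1.
\]
The left-hand side carries the super-exponential factor $m^{n-2}(n-2)!$ while the right-hand side is linear in $m$ and $n$, so intuitively the margin should be large for all but the smallest $(m,n)$.

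My approach is induction on $n\geq 3$ with $m\geq 3$ fixed, combined with a parallel induction on $m$ at each fixed $n$, mirroring the double-induction template invoked at the start of Section~\ref{sec:generalcase}. For the base case $n=3$, the reformulated inequality reduces to $m(2m-1)>6m-1$, i.e., $2m^2-7m+1>0$; the quadratic formula gives roots $(7\pm\sqrt{41})/4$, so this sub-claim holds for $m\geq 4$. For the induction step, the ratio of the reformulated left-hand side at $n+1$ to its value at $n$ is
\[
\frac{m(n-1)\cdot(mn-1)}{m(n-1)-1},
\]
which for $m\geq 3$ and $n\geq 3$ is at least $m(n-1)\geq 6$. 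This multiplicative amplification dwarfs the additive increase by $2m$ on the right, so the induction step succeeds; a symmetric monotonicity argument in $m$ (at each fixed $n$) fills in any remaining grid point.

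The main obstacle is the single corner $(m,n)=(3,3)$, where every estimate is tight: direct substitution gives $3+17=20$ on the left of the original inequality versus $18$ on the right, so the strict inequality as literally stated fails at exactly this point. Because the induction amplification above carries through for every other pair with $m\geq 3,n\geq 3$, a complete proof under the stated hypotheses must reconcile this corner, either by sharpening the additive term on the right (for instance replacing $2mn-1$ with a tighter bound coming from the actual number of cross-edges used in the merging argument of Section~\ref{sec:generalcase}) or by verifying that the observation is never invoked at $(3,3)$ in the subsequent proof (the induction in Theorem~\ref{t:3pancyclic} starts at $n=4$). Once the corner is absorbed, the body of the argument is exactly the multiplicative-vs.-additive amplification described above, and the observation holds.
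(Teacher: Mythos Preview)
The paper offers no proof of this observation; it is simply listed among several claims that ``are easily verified.'' You have correctly discovered that the statement, as literally written, is \emph{false} at the corner $(m,n)=(3,3)$: there $m^{n-2}(n-2)!+2mn-1=3+17=20$ while $m^{n-1}(n-1)!=9\cdot 2=18$, so the strict inequality fails. This is a genuine oversight in the paper, not a gap in your reasoning.

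Your remaining analysis is sound. For every other pair with $m\geq 3$, $n\geq 3$ the inequality does hold, and your multiplicative-versus-additive amplification makes this transparent. Your diagnosis of why the defect is harmless is also on target: the $m=3$ case is handled separately in Section~\ref{sec:up3n}, where the analogous observation is stated only for $n\geq 4$ and the induction in Theorem~\ref{t:3pancyclic} treats $n\in[3]$ as base cases; in Section~\ref{sec:generalcase} the proof of Theorem~\ref{t:m-pancyclic} never actually cites Observation~\ref{obs:minmn}, and the induction on $m$ there starts from $m=3,4$ (already established) and steps to $m+2\geq 5$, so the problematic corner is never invoked. Thus while no proof of the observation \emph{as stated} can exist, you have correctly isolated the defect and shown that the paper's downstream arguments survive once the hypothesis is tightened to exclude $(3,3)$.
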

\begin{obs}\label{om:bigger n-2}
If $m\geq3$ and $n\geq2$, then $m^{n-1}(n-1)!-m^{n-2}(n-2)!>m^{n-2}(n-2)!$.
\end{obs}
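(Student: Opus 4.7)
The plan is to reduce this to a one-line arithmetic inequality by factoring out the common positive quantity $m^{n-2}(n-2)!$ from both sides. Since $m\geq 3$ and $n\geq 2$, the factor $m^{n-2}(n-2)!$ is a positive integer, so dividing preserves the direction of the inequality.

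First I would rewrite $m^{n-1}(n-1)! = m\cdot(n-1)\cdot m^{n-2}(n-2)!$, so the claim becomes equivalent to
\[
\bigl(m(n-1)-1\bigr)\, m^{n-2}(n-2)! \;>\; m^{n-2}(n-2)!,
\]
and after canceling the positive factor $m^{n-2}(n-2)!$ this is just
\[
m(n-1) \;>\; 2.
\]
Since $m\geq 3$ and $n-1\geq 1$, we have $m(n-1)\geq 3>2$, which settles the inequality.

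There is really no obstacle here; the observation is a routine bookkeeping lemma whose only content is ensuring that the two terms used in the cycle-merging construction of Section~\ref{sec:generalcase} (namely $m^{n-1}(n-1)! - m^{n-2}(n-2)!$ and $m^{n-2}(n-2)!$) fit correctly into the inductive range required by the pancyclic/panevencyclic arguments. The only thing worth double-checking is the edge case $n=2$, where $(n-2)!=1$ and $m^{n-2}=1$, giving the inequality $m(m-1)\cdot 1 - 1 > 1$, i.e.\ $m(m-1)>2$, which again holds for every $m\geq 3$.
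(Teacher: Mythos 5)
Your proof is correct and is exactly the routine verification the paper intends: the paper states this as one of several observations that are ``easily verified'' and supplies no proof, and your factoring of $m^{n-2}(n-2)!$ to reduce the claim to $m(n-1)>2$ is the natural argument. One minor slip worth noting: in your $n=2$ sanity check the factored inequality should read $m\cdot 1-1>1$, i.e.\ $m>2$, not $m(m-1)>2$; this does not affect anything, since your general argument already covers $n=2$.
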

\begin{obs}\label{om:+3or+4}
If $m\geq3$ and $n\geq3$, then $m^{n-2}(n-2)!+3<m^{n-1}(n-1)!$ and $m^{n-2}(n-2)!+4<m^{n-1}(n-1)!$.
\end{obs}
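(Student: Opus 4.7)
The plan is to reduce both inequalities to a single arithmetic comparison by factoring the difference $m^{n-1}(n-1)! - m^{n-2}(n-2)!$. Pulling out the common factor, I would write
\[
m^{n-1}(n-1)! - m^{n-2}(n-2)! = m^{n-2}(n-2)! \cdot \bigl(m(n-1) - 1\bigr),
\]
so both claims reduce to showing the right-hand side exceeds $4$ (and hence also $3$).

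With the hypotheses $m \geq 3$ and $n \geq 3$ in hand, bounding each factor from below is immediate. First, $m^{n-2}(n-2)! \geq 3^{1} \cdot 1! = 3$, since $m \geq 3$ and $n-2 \geq 1$. Second, $m(n-1) - 1 \geq 3 \cdot 2 - 1 = 5$. Multiplying these lower bounds yields
\[
m^{n-1}(n-1)! - m^{n-2}(n-2)! \geq 3 \cdot 5 = 15,
\]
which substantially exceeds both $3$ and $4$, proving both inequalities simultaneously.

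Since the observation is being used as bookkeeping to ensure that the specific cycle lengths $m^{n-2}(n-2)! + 3$ (needed in the odd case, cf.\ Observation~\ref{o:+3}) and $m^{n-2}(n-2)! + 4$ (needed in the even case, cf.\ Observation~\ref{o4:+3}) can actually be realized inside a copy of $\up{m}{n-1}$ by the inductive hypothesis, it is really just an elementary arithmetic fact rather than a substantive claim. There is no real obstacle: the only potential concern would be a boundary case $(m,n) = (3,3)$ that barely fails, but the margin of $15$ shows the inequality has ample room to spare.
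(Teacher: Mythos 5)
Your proof is correct. The paper states this observation without proof (it is one of several listed as routine), and your factorization $m^{n-1}(n-1)!-m^{n-2}(n-2)!=m^{n-2}(n-2)!\,\bigl(m(n-1)-1\bigr)\geq 3\cdot 5=15>4$ is precisely the elementary verification the authors intend.
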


Now we state and prove the general case including both parities of $m$.

\begin{thm}\label{t:m-pancyclic}
For $m\geq3$ and $n\geq1$, the graph $\up{m}{n}$ is
    \begin{enumerate}[(i)]
        \item $m'$-pancyclic, for odd $m$, or
        \item $m'$-panevencyclic, for even $m$,
    \end{enumerate}
    where $m' = \min\{m,6\}$.
\end{thm}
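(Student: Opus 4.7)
The plan is to carry out the double induction scheme described immediately before the statement, with the odd-$m$ and even-$m$ cases handled in parallel. Ingredients (1) and (2) of the scheme are already secured: the base cases $S(3,2)$ and $S(4,2)$ are computer-verified, and Theorems \ref{t:3pancyclic} and \ref{t:up4n} supply $S(3,n)$ and $S(4,n)$ for every $n\geq 1$. What is left is ingredient (3): for each fixed $n\geq 2$, an induction on $m$ in steps of two advancing $S(m,n)\Rightarrow S(m+2,n)$. I expect this step to be a direct repetition of the arguments in those two theorems, performed inside $\up{m+2}{n}$ and using the general-$m$ versions of Observations \ref{o:(m,n)mergecycles}--\ref{om:+3or+4}.

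Concretely, when the pairs $(m,n)$ are processed so that rows with smaller $n$ are completed first, the hypothesis $S(m+2,n-1)$ is available by the time we turn to $S(m+2,n)$. Each of the $(m+2)n$ embedded copies of $\up{m+2}{n-1}$ inside $\up{m+2}{n}$ then contains every admissible cycle length from $m'$ up to $(m+2)^{n-1}(n-1)!$. The base cycle $\C=(r_{n-1}\flop_n)^{(m+2)n}$ of Lemma \ref{lem:basecycle} has length $2(m+2)n$ and visits each of those copies along an edge labeled $r_{n-1}$, so Lemma \ref{l:edgetrans} lets us pick, in each copy, a cycle sharing exactly one $r_{n-1}$-edge with $\C$. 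Merging $\C$ with cycles $C_1,\ldots,C_q$ whose lengths lie in $((m+2)^{n-2}(n-2)!,(m+2)^{n-1}(n-1)!]$ yields a cycle of length
\[
    \sum_{i=1}^q\bigl(\ell(C_i)-1\bigr) + \bigl(2(m+2)n-q\bigr),
\]
and by Observation \ref{o:(m,n)mergecycles} these totals fill the interval $[(m+2)^{n-2}(n-2)!+2(m+2)n-1,\,(m+2)^n n!]$. Observation \ref{obs:minmn} then guarantees an overlap with the range already realized inside a single copy, so no length is skipped. To go from a merged cycle of length $k$ to one of length $k+1$ (odd $m+2$) or $k+2$ (even $m+2$), I would reuse the two-case swap of the earlier proofs: if some $C_j$ is not yet maximal, lengthen it by one or two using the inductive hypothesis on $\up{m+2}{n-1}$; otherwise, replace $C_q$ by a slightly shorter cycle (still large enough thanks to Observation \ref{om:bigger n-2}) and insert a small new cycle $C_{q+1}$ in a fresh copy, of a length guaranteed by Observation \ref{om:+3or+4}.

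The main obstacle I anticipate is the inner base case $n=2$ once $m+2\geq 7$. At that point $m'=6$ drops strictly below $m+2$, and the statement $S(m+2,2)$ demands cycles of lengths $6,7,\ldots,m+1$ that are too short to come from a single copy of $\up{m+2}{1}$ (a plain $(m+2)$-cycle) or from the base cycle of length $4(m+2)$ and its merges. The $6$-cycle $\flop_1\flop_2^2 r_1 r_2^2$ exhibited in the proof of Theorem \ref{t:n=2girths}, combined with short insertions of $r_1,\flop_1,r_2,\flop_2$, should realize these short lengths, but turning this into a uniform construction covering every admissible length in $[m',2(m+2)^2]$ for all $m+2\geq 5$ is the delicate part. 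I would isolate it as a standalone lemma supplying $S(m,2)$ for every $m\geq 5$, which then serves as the base for the $n$-induction that drives the rest of the argument.
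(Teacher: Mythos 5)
Your proposal follows essentially the same route as the paper: the same double-induction scaffolding, the same base cycle $\C$ of length $2mn$, the same merging formula $\sum_{i=1}^q(\ell(C_i)-1)+(2mn-q)$, the same appeal to Lemma~\ref{l:edgetrans} to align an $r_{n-1}$-edge of each $C_i$ with $\C$, and the same two-case swap ($k\mapsto k+1$ for odd $m$, $k\mapsto k+2$ for even $m$) using Observations~\ref{o:(m,n)mergecycles}--\ref{om:+3or+4}. Your remark that the rows with smaller $n$ must be completed first, so that $S(m+2,n-1)$ is available when proving $S(m+2,n)$, is in fact a more careful account of the induction order than the paper gives, since the paper's ``increase $m$ by two'' step actually invokes the hypothesis on $\up{m}{n-1}$ rather than on $\up{m}{n}$.

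The obstacle you flag at the end is real, and you should know that the paper's own proof does not resolve it either: the argument there explicitly fixes $n\geq 3$, and the $n=2$ row is secured only for $m=3,4$ (by computer search) and implicitly for $m=5,6$ (via the tables). For $m\geq 7$ the statement $S(m,2)$ demands cycles of every length in $[6,2m^2]$ (or every even length, for even $m$), and as you observe, the copies of $\up{m}{1}$ contribute only length $m$, the base cycle only length $4m$, and their merges only the isolated values $qm-2q+4m$; the $6$-cycle from Theorem~\ref{t:n=2girths} covers length $6$ but lengths such as $7,8,\ldots,m-1$ are not produced by any construction in the paper. Since the $n$-induction at $n=3$ consumes $S(m,2)$, this missing lemma is load-bearing for all $m\geq 7$. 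So your proposal is a faithful reconstruction of the published argument, but the standalone lemma you defer --- realizing every admissible length in $[m',2m^2]$ in $\up{m}{2}$ for $m\geq 7$ --- must actually be proved for either your version or the paper's to be complete; as written, both stop short of it.
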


\begin{proof}
The proof shall be done by a double induction, upon $m$ and $n$. The cases where $n=1$ are trivially true since $\up{m}{1}$ is itself an $m$-cycle. The non-trivial base cases would be $\up{3}{2}$ and $\up{4}{2}$ for the even and odd cases for $m$. Both have been verified by computer search, see Table \ref{tab:undirected-cycles}. Furthermore, we have already shown the induction on the $n$ direction when $m=3$ in Theorem \ref{t:3pancyclic} and the induction on the $n$ direction when $m=4$ in Theorem \ref{t:up4n}. Thus, for both parities of $m$ we must show the results hold for a fixed $n$ and as we increase $m$ by two. Fix $n \geq 3$, and let $m' = \min\{m,6\}$.

\begin{description}
\item[$m$ is odd] 
We now show that there are cycles of any length $\ell \in [m',m^{n}n!]$ in $\up{m}{n}$. By the inductive hypothesis we know there already exist cycles of length $\ell \in [m',m^{n-1}(n-1)!]$ within a copy of $\up{m}{n-1}$ with fixed signed-symbol. Thus it remains to verify that there are cycles of length $\ell \in [m^{n-1}(n-1)!,m^{n}n!]$ within $\up{m}{n}$.

These cycles are constructed by merging upon the base cycle, $\C$, cycles of varying lengths within the incident copies of $\up{m}{n-1}$. The length of such a cycle will be in the form (\ref{c1mn}). Notice, however, that the minimum cycle length obtained from (\ref{c1mn}) is $\min(\ell(\C))=m^{n-2}(n-2)!+2mn-1$, which is greater than $m^{n-2}(n-2)!$ and $m'$. Also, by Observation~\ref{obs:minmn}, the lengths of form (\ref{c1mn}) overlap with the lengths of cycles that exist by the inductive hypothesis and can have greater lengths by Observation~\ref{o:(m,n)mergecycles}. We provide a procedure to produce such a cycle of length $k+1$, if there exists such a cycle of length $k < m^{n}n!$. 

Let us suppose $k= \sum_{i=1}^q(\ell(C_i)-1)+(2mn-q)$ for some $q\in [mn]$ and $\ell(C_i)>m^{n-2}(n-2)!$ for $i \in [q]$. Moreover, notice that since $k<m^{n}n!$, it follows that $q<mn$. Now there are two cases to consider.
\begin{description}
\item[$\ell(C_j)<m^{n-1}(n-1)!$ for some $j$] In this case, we can replace $C_j$ with a cycle $C'_j$ of length $\ell(C_j)+1$, which exists since we are assuming that $\up{m}{n-1}$ is $m'$-pancyclic. Thus by merging $C_1,\ldots,C'_j,\ldots,C_q$ with $\C$, we obtain a cycle of length $k+1$. 

\item[$\ell(C_i)=m^{n-1}(n-1)!$ for all $1 \leq i < q$] In this case, we can replace $C_q$ with a cycle $C'_q$ of length $m^{n-1}(n-1)!-m^{n-2}(n-2)!$, which is greater than $m^{n-2}(n-2)!$ by Observation~\ref{om:bigger n-2}, and in a copy of $\up{m}{n-1}$ different from the $q<mn$ copies where the cycles $C_1,\ldots,C_q$ are embedded, use a cycle $C_{q+1}$ of length $m^{n-2}(n-2)!+3$, which we know is less than $m^{n-1}(n-1)!$ by Observation~\ref{om:+3or+4}. Such a $C_{q+1}$ exists due to the assumption that $\up{m}{n-1}$ is $m$-pancyclic. Hence, after merging $\C$ with $C_1,\ldots,C_{q-1},C'_q,C_{q+1}$ along the edges they have in common, the resulting cycle has length 
\begin{align*}
    &\sum_{i=1}^{q-1}(\ell(C_i)-1)+(\ell(C'_q)-1)+(\ell(C_{q+1})-1)+(2mn-(q+1))\\
    &=\sum_{i=1}^{q-1}(\ell(C_i)-1)+\left(m^{n-1}(n-1)!\right)+(2mn-q)\\
    &=\sum_{i=1}^{q-1}(\ell(C_i)-1)+(m^{n-1}(n-1)!-1)+(2mn-q+1)\\
    &=\sum_{i=1}^q(\ell(C_i)-1)+(2mn-q)+1\\
    &=k+1,
\end{align*} as desired.
\end{description}

Thus, there are cycles of any length $\ell \in [m',m^{n}n!]$ within $\up{m}{n}$.  Therefore, $\up{m}{n}$ is $m'$-pancyclic.

\item[$m$ is even] 

The case for even $m$ is nearly identical. However, the fact that the base case of the induction being panevencyclic forces the inductive hypothesis to assume there are even length cycles in the copies of $\up{m}{n-1}$ within $\up{m}{n}$. Furthermore, the length of the cycle $C_{q+1}$ would need to be $m^{n-2}(n-2)!+4$, not $m^{n-2}(n-2)!+3$, as it was in the case when $m$ was odd. This is to ensure the cycle is of even length.

Thus, there are cycles of any even length $\ell \in \{2k \mid k \in [m'/2,m^{n}n!/2]\}$ within $\up{m}{n}$.  Therefore, $\up{m}{n}$ is $m'$-panevencyclic.
\end{description} 

This concludes the proof. 
\end{proof}

In our study, we used computers to find lengths of cycles in $\pmn$ and $\upmn$. This proved to be a bit challenging due to the number of vertices of these graphs is $m^nn!$. We modify an existing algorithm~\cite{HJ08} and implement it in parallel. We include the details in the following section. 

\section{Computer search for lengths of cycles in \texorpdfstring{$\pmn$}{P(m,n)} and \texorpdfstring{$\upmn$}{Pm(n)}.}\label{sec:computer}

It would be interesting to know the cycles that can be embedded in $\pmn$ for different values of $m,n$. For this, we implement a modified version of the algorithm presented in~\cite{HJ08} to find cycles of different lengths. The first modification is that our algorithm is running on multiple threads, with the problem of finding a cycle of a given length divided by the first initial moves. Up to a given depth, a new thread is spawned for every neighbor of a vertex $v$ seen in the search. From the thread's initial given moves forward, each thread searches depth-first for cycles of lengths that have not yet been seen. If a given graph has $n$ vertices, once a cycle of length $n$ is found, every thread becomes limited in what length it will search up to, starting from $n-1$, so as to not waste computation power looking for new cycles of large sizes that we have already found. Notably, this search space restriction is strictly from the top-down, and if the restriction is occurring for cycles of length $k$, then every cycle length from $k$ up to, and including $n$, has been found. Because we cannot rule out whether a cycle exists for a given length, without exhaustively searching every possible path smaller than it, this optimization is only beneficial for graphs with cycles of larger length. Otherwise, we cannot limit the search space, as we will have exhausted the search space entirely by the time we are able to conclusively shrink it, if large cycle lengths do not exist.

Additionally, our algorithm only explores cycles from one given vertex since $\pmn$ is vertex-transitive. From there, the algorithm saves to a file the exact cycle found every time it encounters a cycle of a length that it has not yet seen, and stores that cycle length in shared memory between the threads, so the other threads can determine if that cycle length can be used to shrink the search space to smaller cycle lengths. In spite of all these optimizations, we were only able to find the cycle for the values of $m,n$ listed in Table~\ref{tab:directed-cycles} and Table~\ref{tab:undirected-cycles}. The code and instructions can be found in the following repository~\cite{A22}.

\section*{Acknowledgments}

We would like to thank Aaron Leslie and Jordan Graves for their contribution in implementing the algorithm described in Section~\ref{sec:computer}.

\end{document}